\newtheorem{theorem}{Theorem}
\newtheorem{proposition}{Proposition}
\newtheorem{definition}{Definition}
\newtheorem{remark}{Remark}
\DeclareMathOperator*{\indlim}{ind \lim}
\begin{document}
\begin{center}
\textbf{\uppercase{L\'{e}vy Laplacians  in Hida calculus and Malliavin calculus}}

B.~O.~Volkov

\texttt{borisvolkov1986@gmail.com}

Steklov Mathematical Institute of Russian Academy of Sciences,\\ ul. Gubkina 8, Moscow, 119991 Russia

\end{center}

{\bf Abstract:} 
Some connections between different definitions of L\'{e}vy Laplacians in the stochastic analysis are considered. Two approaches are used to define these operators. The standard one is based on the application of the theory of Sobolev-Schwartz distributions over the Wiener measure (the Hida calculus).  One can consider the  chain of L\'evy Laplacians parametrized by a real parameter with the help of this approach.  One of the elements of this chain is the classical L\'evy Laplacian.  Another approach to define the L\'evy Laplacian is based on the application of the theory of Sobolev spaces over the Wiener measure (the Malliavin calculus). It is proved that the L\'evy Laplacian defined with the help of the second approach coincides with one of the elements of the chain of L\'evy Laplacians, which is not the classical L\'evy Laplacian, under the imbedding of the Sobolev space over the Wiener measure into the space of generalized functionals over this measure. It is shown  which L\'{e}vy Laplacian in the stochastic analysis is connected to the gauge fields.

key words: L\'{e}vy Laplacian, Yang-Mills equations, Hida calculus, Malliavin calculus 

AMS Subject Classification: 60H40,81T13,70S15

\section*{Introduction}
An infinite-dimensional Laplacian, given by the formula
\begin{equation}
\label{l1}
\Delta^{\{e_n\}}_L f(x)=\lim_{n\to \infty}\frac 1n\sum_{k=1}^n \langle f''(x)e_k,e_k\rangle,
\end{equation}
where the function $f$  is defined on a separable Hilbert space  $H$ and  $\{e_n\}$ is an orthonormal basis in  $H$, is called the  L\'{e}vy Laplacian or the L\'{e}vy-Laplace operator. 
The Hida calculus or the white noise analysis is the theory of Sobolev-Schwartz distributions over the (abstract) Wiener measure. The Malliavin calculus is the theory of Sobolev spaces over measures on infinite-dimensional spaces, in particular, over the Wiener measure.\footnote{In addition, the Malliavin calculus includes an analysis of the smoothness of non-linear images of measures that is not used in this paper.} The goal of the present paper is to show that  there are two infinite-dimensional Laplacians in the stochastic analysis, which are natural analogues of  operator~(\ref{l1}). The first Laplacian is defined
using  the Hida calculus. This operator will be called the classical L\'evy Laplacian. An extensive literature is devoted to the study of this operator (see the review~\cite{K2003}, and also~\cite{Volkov2013,AJS2013} and the papers cited there). The second Laplacian is defined using the Malliavin calculus. This operator is connected with the gauge fields. 
In the Hida calculus one can consider a chain of infinite-dimensional Laplacians parametrized by a real parameter. One of the elements of this chain is the classical L\'evy Laplacian. In the paper it is proved that the L\'evy Laplacian, defined with the help of the Malliavin calculus, is isomorphic to the element of this chain under the embedding of the Sobolev space over the Wiener measure
into the space of generalized functionals over this measure. This element does not coincide with the classical L\'evy Laplacian.

One of the main reasons for interest in the L\'{e}vy Laplacian and differential operators defined by
analogy is their connection to the Yang-Mills equations. In the papers~\cite{AGV1993, AGV1994} by Accardi, Gibilisco and Volovich an analogue of the L\'{e}vy Laplacian on the space of functions on the space of paths in $\mathbb{R}^d$ has been introduced. This analogue is also called the L\'{e}vy Laplacian. For this operator in~\cite{AGV1994} the following has been proved. A connection in a vector bundle over $\mathbb{R}^d$ is a solution to the Yang-Mills equations if and only if the parallel transport generated by this connection is a solution to the Laplace equation for the L\'{e}vy Laplacian. In the paper~\cite{LV2001} by  Leandre and  Volovich the L\'{e}vy Laplacians on the space of functions on the set of paths in a compact Riemannian manifold and on the Sobolev space over the Wiener measure  on the space of paths in a compact Riemannian manifold have been introduced. It has been shown  that the theorem on the connection between the Levy Laplacian and the gauge fields is also satisfied in these cases.
 In~\cite{VolkovLLI}  by the author  the relationship between the L\'{e}vy Laplacian   and instantons has been studied.

In~\cite{AGV1993,AGV1994,LV2001} the L\'{e}vy Laplacian has been defined not as the  Ces\`aro mean of the second derivatives but as an integral functional generated by a special form of the second derivative. This approach to  define the L\'{e}vy Laplacian also goes back to the original P. L\'{e}vy's works. In  the paper~\cite{AGV1993} the following problem has been posed:
can one represent the L\'{e}vy Laplacian associated with the gauge fields in a form similar to~(\ref {l1}). In the deterministic planar case, it has been shown in work~\cite{VolkovLLI} that this is indeed the case (see also~\cite{AS2006}).
In the author's work~\cite{Volkov2017}, using the Malliavin calculus, the L\'{e}vy Laplacian, defined as the Ces\`aro mean of second partial derivatives, has been introduced on the Sobolev space over the Wiener measure and its relation to gauge fields has been studied. It should be noted that, unlike the deterministic case, the L\'{e}vy Laplacians on the Sobolev spaces over the Wiener measure, introduced in the works~\cite{LV2001} and~\cite{Volkov2017}, operate in different ways.

One of the motivations for studying the Hida calculus was the development of the harmonic analysis for the L\'{e}vy Laplacian (see~\cite{H1975}, and also~\cite{KOS,HKPS,K,HiSi}). If a Sobolev space over a Wiener measure is a natural domain of the Gross-Volterra Laplacian, then the space of generalized Hida functionals is a sufficiently wide space to contain the  domain of the classical L\'{e}vy Laplacian. In addition, using the Hida calculus, certain generalizations of the L\'{e}vy Laplacian has been studied: the so-called exotic and nonclassical L\'{e}vy Laplacians (see~\cite{AJS2011,AJS2013,Volkov2013}). The family of the exotic L\'{e}vy Laplacians $\Delta^{(s)}_{exotic}$, where $s\geq 0$, has been  introduced in~\cite{AS1993}. This family has the following properties: 
$\Delta^{(0)}_{exotic}$ is the Gross-Volterra Laplacian; $\Delta^{(1)}_{exotic}$ is the classical L\'{e}vy Laplacian; if $s_1<s_2$, then  the domain of $\Delta^{(s_1)}_{exotic}$ belongs to the kernel of $\Delta^{(s_2)}_{exotic}$. We can consider the chain of L\'{e}vy Laplacians  $\Delta^{(s)}_{L}$ of order $s\in \mathbb{R}$ such that $\Delta^{(s)}_{L}=s\Delta^{(s)}_{exotic}$. Thus, the chain of the exotic Laplacians $\Delta^{(s)}_{exotic}$ can be extended for negative $s$. In this paper we consider the L\'{e}vy Laplacian 
$\Delta^{(-1)}_{L}$ in the Hida calculus. As already mentioned above, the main result of this article is as follows.  We show that the L\'{e}vy Laplacian, introduced in~\cite{Volkov2017} using the Malliavin calculus, coincides with $\pi^2\Delta^{(-1)}_{L}$ under the natural embedding of the Sobolev space into the space of generalized Hida functionals.
Unlike the classical L\'{e}vy Laplacian  $\Delta^{(1)}_{L} $ in the Hida calculus, the operator $\Delta^{(-1)}_{L}$ is little studied.

In a  particular case some of the results of this paper have been obtained in~\cite{Volkov2017a}.

The paper is organized as follows. The first section provides general information about the chain
of the L\'{e}vy Laplacians  in the deterministic case. The second section shows which elements of the chain are related to the Yang-Mills equations.  In the third section the definition of the L\'{e}vy Laplacian in the Malliavin calculus and the theorem on its connection with the Yang-Mills equations are given. In the fourth section  the definitions of the  L\'{e}vy Laplacians in the Hida calculus are given and  the theorem,  that relates the L\'{e}vy  Laplacian in the Malliavin calculus and the Laplacian $\Delta^{(-1)}_{L}$ in the Hida calculus, is proved.

\section{L\'{e}vy Laplacians}

Let $V_0$ be a locally convex space (LCS) continuously embedded in $L_2([0,1],\mathbb{R})$ in  such a way that the image of $V_0$ under the embedding is dense in $L_2([0,1],\mathbb{R})$. The symbol $\otimes_\pi$ denotes the projective tensor product of LCSs, and the symbol $\otimes$ denotes the Hilbert tensor product.
 Let $\{p_1,p_2,\ldots,p_d\}$  be an orthonormal basis in $\mathbb{R}^d$.
Let $V=\mathbb{R}^d\otimes_\pi V_0$ and $V_\mu=\{\gamma=(\gamma^\nu)_{\nu=1}^d\in V\colon \gamma^\mu\in V_0,\gamma^\nu=0 \text{ if $\nu\neq \mu$}\}$.
Then $V=V_1\oplus V_2 \oplus \ldots\oplus V_d$.
Let $M_N(\mathbb{C})$ be the space of all complex $N\times N$-matrices.  Let $(\cdot,\cdot)_{M_N(\mathbb{C})}$ be the scalar product on  $M_N(\mathbb{C})$, defined in the standard way: if  $M_1,M_2\in M_N(\mathbb{C})$, then $(M_1,M_2)_{M_N(\mathbb{C})}=tr(M_1 M_2^\ast)$.
 The symbol  $C^2(V,M_N(\mathbb{C}))$ denotes the space 
of two times Fr\'{e}chet differentiable 
$M_N(\mathbb{C})$-valued functions on $V$. For any  $x\in V$ we have $f'(x)\in M_N(\mathbb{C})\otimes_\pi V^\ast $ and $f''(x)\in M_N(\mathbb{C})\otimes_\pi L(V,V^\ast)$. (If $X_1$ and $X_2$ are complex or real LCSs, the symbol $L(X_1,X_2)$ denotes the space of all continuous linear mappings from $X_1$ to $X_2$.) The symbol
 $f''_{V_\mu V_\mu}$ denotes the second partial derivative of the function $f\in C^2(V,M_N(\mathbb{C}))$  along the space $V_\mu$.\footnote{The partial derivative $f''_{V_\mu V_\mu}(x) $ is the element from $M_N(\mathbb{C})\otimes_\pi L(V_\mu,V_\mu^\ast ) $, which we can identify with the element from $ M_N(\mathbb{C})\otimes_\pi L(V_0,V_0^\ast)$.
For a detailed exposition of the theory of differentiation on infinite-dimensional spaces see, e. g.,~\cite{BS,SmShk}.}
Let $\{e_n\}$ be an orthonormal basis in $L_2([0,1],\mathbb{R})$. We assume that $e_n\in V_0$ for each $n\in \mathbb{N}$. 

\begin{definition}
\label{def1}
The L\'evy Laplacian $\Delta^{\{e_n\},s}_{L}$  of order  $s\in\mathbb{R}$, generalized  by the basis  $\{e_n\}$,   
is a linear  mapping  from  $Dom \Delta^{\{e_n\},s}_{L}$ to the space of all $M_N(\mathbb{C})$-valued functions on  $V$ defined by:\footnote{Let $X$ be a LCS. If $F\in M_N(\mathbb{C})\otimes_\pi X^\ast$ and $f\in X$, then $<F,f>$ is an  element from $M_N(\mathbb{C})$ such that
$<F,M\otimes f>=(<F,f>,\overline{M})_{M_N(\mathbb{C})}$ for all $M\in M_N(\mathbb{C})$.}
\begin{multline}
\label{LL}
\Delta^{\{e_n\},s}_{L}f(x)=\lim_{n\to\infty}\frac 1n \sum_{k=1}^n\sum_{\mu=1}^d k^{1-s}<f''(x)p_\mu e_k,p_\mu e_k>=\\=\lim_{n\to\infty}\frac 1n \sum_{k=1}^n\sum_{\mu=1}^d k^{1-s}<f''_{V_\mu V_\mu}(x)e_k,e_k>,
\end{multline}
where $Dom  \Delta^{\{e_n\},s}_{L}$ is the space of all functions  $f\in C^2(V,M_N(\mathbb{C}))$, for which the right side of~(\ref{LL}) exists for all $x\in V$. 
\end{definition}

This definition is motivated by the following definition of the exotic L\'{e}vy Laplacians.

\begin{definition}
The exotic L\'evy Laplacian  $\Delta^{\{e_n\},s}_{exotic}$ of order $s\geq 0$, 
generalized by the basis $\{e_n\}$, is a linear mapping from   $Dom \Delta^{\{e_n\},s}_{exotic}$    
to the space of all  $M_N(\mathbb{C})$-valued functions on  $V$, defined by:
\begin{equation}
\label{exotic}
\Delta^{\{e_n\},s}_{exotic}f(x)=\lim_{n\to \infty} \frac 1{n^s} \sum_{\mu=1}^d \sum_{k=1}^n<f''(x)p_\mu e_k,p_\mu e_k>,
\end{equation}
where $Dom \Delta^{\{e_n\},s}_{exotic}$ is the space of all functions  $f\in C^2(V,M_N(\mathbb{C}))$, for which the right side of~(\ref{exotic})  exists for all  $x\in V$. 
\end{definition}
Then $\Delta^{\{e_n\},0}_{exotic}$ is the Gross-Volterra Laplacian  and $\Delta^{\{e_n\},1}_{exotic}$ is the classical L\'{e}vy Laplacian. 
\begin{proposition}
If $s>0$, then $\Delta^{\{e_n\},s}_{L}=s\Delta^{\{e_n\},s}_{exotic}$. 
\end{proposition}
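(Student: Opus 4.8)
The plan is to reduce the statement to a purely numerical summation identity and then prove it by summation by parts. Fix $f\in C^2(V,M_N(\mathbb{C}))$ and $x\in V$, and set $a_k=\sum_{\mu=1}^d\langle f''(x)p_\mu e_k,p_\mu e_k\rangle\in M_N(\mathbb{C})$. Because both defining expressions are linear in the terms $a_k$ and limits in the finite-dimensional space $M_N(\mathbb{C})$ are taken entrywise, it suffices to prove the claim for a scalar sequence $(a_k)_{k\ge 1}$. Thus the proposition follows once I show that, for $s>0$, the limit $\lim_{n\to\infty}n^{-s}\sum_{k=1}^n a_k=L$ exists if and only if $\lim_{n\to\infty}n^{-1}\sum_{k=1}^n k^{1-s}a_k$ exists, and that in this case the latter equals $sL$. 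This gives simultaneously the coincidence of the two domains and the factor $s$.

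First I would introduce the partial sums $S_n=\sum_{k=1}^n a_k$ (with $S_0=0$) and apply Abel summation to the weights $b_k=k^{1-s}$, obtaining
\begin{equation*}
\sum_{k=1}^n k^{1-s}a_k=n^{1-s}S_n+\sum_{k=1}^{n-1}\bigl(k^{1-s}-(k+1)^{1-s}\bigr)S_k .
\end{equation*}
Dividing by $n$ splits the object of interest into the boundary term $n^{-s}S_n$, which tends to $L$ by hypothesis, and the averaged sum $n^{-1}\sum_{k=1}^{n-1}\bigl(k^{1-s}-(k+1)^{1-s}\bigr)S_k$. Writing $S_k=Lk^s+r_k$ with $r_k=o(k^s)$ and inserting this decomposition, the principal part becomes $L\,n^{-1}\sum_{k=1}^{n-1}c_k$ with $c_k=\bigl(k^{1-s}-(k+1)^{1-s}\bigr)k^s=k\bigl(1-(1+1/k)^{1-s}\bigr)$. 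A Taylor expansion of $(1+1/k)^{1-s}$ shows $c_k\to s-1$, so by the classical fact that Cesàro averages inherit the limit of a convergent sequence, $L\,n^{-1}\sum_{k=1}^{n-1}c_k\to (s-1)L$. Adding the boundary contribution yields the desired value $L+(s-1)L=sL$, and it is precisely the identity $1+(s-1)=s$ that produces the factor $s$.

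The remaining point is to control the error coming from $r_k$, namely $n^{-1}\sum_{k=1}^{n-1}\bigl(k^{1-s}-(k+1)^{1-s}\bigr)r_k$. Here I would use the mean value theorem to get the uniform bound $|k^{1-s}-(k+1)^{1-s}|\le C k^{-s}$ (valid for $s>0$), whence each summand is bounded by $C k^{-s}|r_k|=C\,|r_k|/k^s$, a sequence tending to $0$; its Cesàro average therefore tends to $0$ as well. I expect this to be the main, though routine, obstacle: the estimates are elementary, but the argument must be organized so that the vanishing of $r_k/k^s$ is transferred through the weighted sum without appealing to any unjustified uniformity. Finally, the converse implication, needed for the equality of the domains, is obtained by the symmetric computation: solving $a_k=k^{s-1}(T_k-T_{k-1})$ for $T_n=\sum_{k=1}^n k^{1-s}a_k$ and applying the same summation by parts, where the asymptotics $\sum_{k=1}^n k^{s-1}\sim n^s/s$ (valid precisely because $s>0$) reproduce the reciprocal factor $1/s$.
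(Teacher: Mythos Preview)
Your reduction to the scalar summation identity is exactly what the paper does; the paper then simply quotes this identity (that for $s>0$ one has $\lim_{n}\frac{1}{n}\sum_{k\le n}a_kk^{1-s}=s\lim_{n}\frac{1}{n^s}\sum_{k\le n}a_k$, with existence of one side implying the other) as a known fact from \cite{AJS2013,Volkov2013} and stops. You instead prove the identity from scratch via Abel summation, which is a genuine addition: your argument is self-contained and transparently shows where the factor $s$ and the restriction $s>0$ come from (the asymptotics $\sum_{k\le n}k^{s-1}\sim n^s/s$ in the converse direction). Both directions you sketch are correct; in particular the error control via $|k^{1-s}-(k+1)^{1-s}|\le|1-s|k^{-s}$ together with Ces\`aro averaging handles the $o(k^s)$ remainder, and the symmetric computation for the converse goes through with the analogous bound $|k^{s-1}-(k+1)^{s-1}|\le C_s k^{s-2}$ and the asymptotics just mentioned.
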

\begin{proof}
The proof follows directly from the following fact (see~\cite{AJS2013,Volkov2013}).
Let $(a_n)\in \mathbb{R}^{\infty}$ and  $s>0$.  Then
\begin{equation}
\label{seq}
\lim_{n\to \infty}\frac 1 n\sum_{k=1}^na_kk^{-s+1}=s\lim_{n\to \infty}\frac 1{n^s}\sum_{k=1}^n a_k
\end{equation}
in the sense that if  one side of  equality~(\ref{seq}) exists, then the other exists  and  equality~(\ref{seq}) holds.
\end{proof}

Thus, formula~(\ref{LL}) allows us to extend the chain of the exotic L\'{e}vy Laplacians for $s\leq 0$.
The operator 
$\Delta^{\{e_n\},0}_{L}$ is different from the Gross-Volterra operator.

\begin{remark}
For the first time formula~(\ref{seq}) was used  for study of the exotic L\'{e}vy Laplacians in~\cite{AS2009}. The L\'{e}vy Laplacians $\Delta^{\{e_n\},s}_{L}$ are the particular case  of nonclassical L\'{e}vy Laplacians (see~\cite{AS2007,Volkov2013}).
\end{remark}

The following definition belongs to P.~L\'{e}vy (see~\cite{L1951,F2005,KOS}).

\begin{definition}
 An orthonormal basis  $\{e_n\}$ in  $L_2([0,1],\mathbb{R})$ is weakly uniformly dense (or equally uniformly dense), if $$\lim_{n\to\infty}\int_0^1 h(t)(\frac 1n\sum_{k=1}^n
e_k^2(t)-1)dt=0$$
for any $h\in L_{\infty}([0,1],\mathbb{R})$.
\end{definition}
Let  $h_n(t)=\sqrt{2}\sin(\pi nt)$ and  $l_n(t)=\sqrt{2}\cos(\pi nt)$ for $n\in \mathbb{N}$ and $l_0(t)=1$.
The orthonormal bases $\{h_n\}_{n=1}^\infty$ and $\{l_n\}_{n=0}^\infty$ in $L_2([0,1],\mathbb{R})$ are  weakly uniformly dense.

Let  $V_{0,0}$ be a subspace of  $V_0$ and let $V_{0,\mu}=\{\gamma=(\gamma^\nu)_{\nu=1}^d\in V_\mu\colon \gamma^\mu\in V_{0,0}\}$.
\begin{proposition}
\label{prop2}
Let $f\in C^2(V,M_N(\mathbb{C}))$. Let for any  $\mu\in\{1,\ldots,d\}$  and for all $u,v\in V_{0,\mu}$ the following holds
\begin{equation}
\label{f2d}
<f''_{V_{0,\mu} V_{0,\mu}}(x) u,v>=\int_0^1K^V_{\mu\mu}(x;s,t) u(t)v(s)dtds+\int_0^1K^L_{\mu\mu}(x;t) u(t)v(t)dt,
\end{equation}
 where $f''_{V_{0,\mu} V_{0,\mu}}$ is the second partial derivative of the function $f$ along the space $V_{0,\mu}$,  $K^V_{\mu\mu}(x,\cdot,\cdot)\in L_2([0,1]\times[0,1],M_N(\mathbb{C}))$ and $K^L_{\mu \mu}(x,\cdot)\in L_{\infty}([0,1],M_N(\mathbb{C}))$ ($K^V$ is the Volterra part and $K^L$ is the L\'{e}vy part). 
If $\{e_n\}$ is  weakly uniformly dense basis in $L_2([0,1],\mathbb{R})$ and $e_n\in V_{0,0}$ for all
$n\in \mathbb{N}$, then
\begin{equation}
\label{LLf}
\Delta^{\{e_n\},1}_{L} f(x)=\sum_{\mu=1}^d\int_0^1K^L_{\mu\mu}(x,t)dt.
\end{equation}
\end{proposition}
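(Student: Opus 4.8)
The plan is to substitute $u=v=e_k$ into the structural formula~(\ref{f2d}), split the resulting diagonal quantity into its Volterra and L\'evy parts, and study the Ces\`aro mean of each separately. Since $e_n\in V_{0,0}$ for every $n$ and, by the footnote identification, $f''_{V_\mu V_\mu}$ acts as $f''_{V_{0,\mu}V_{0,\mu}}$ on these vectors, setting $s=1$ in~(\ref{LL}) (so that $k^{1-s}=1$) gives $\Delta^{\{e_n\},1}_{L}f(x)=\lim_{n\to\infty}\frac1n\sum_{k=1}^n\sum_{\mu=1}^d\langle f''_{V_{0,\mu}V_{0,\mu}}(x)e_k,e_k\rangle$. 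By~(\ref{f2d}) each diagonal term equals $a_k^\mu+b_k^\mu$, where $a_k^\mu=\int_0^1\int_0^1 K^V_{\mu\mu}(x;s,t)e_k(t)e_k(s)\,dt\,ds$ is the Volterra contribution and $b_k^\mu=\int_0^1 K^L_{\mu\mu}(x;t)e_k^2(t)\,dt$ is the L\'evy one.

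For the Volterra part I would show $\frac1n\sum_{k=1}^n a_k^\mu\to 0$. Working entrywise in $M_N(\mathbb{C})$, each scalar kernel is an element of $L_2([0,1]\times[0,1])$ and hence defines a Hilbert--Schmidt operator $T$ on $L_2([0,1])$ with $(a_k^\mu)_{ij}=\langle Te_k,e_k\rangle$. Cauchy--Schwarz together with the identity $\sum_k\|Te_k\|^2=\|T\|_{HS}^2=\|K^V_{\mu\mu,ij}(x;\cdot,\cdot)\|^2_{L_2}<\infty$ shows the diagonal terms are square-summable, so $(a_k^\mu)_{ij}\to 0$ and therefore $a_k^\mu\to 0$ in $M_N(\mathbb{C})$. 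Since every convergent sequence has the same Ces\`aro limit, the Volterra contribution to~(\ref{LL}) vanishes.

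For the L\'evy part I would pull the finite sum inside the integral to write $\frac1n\sum_{k=1}^n b_k^\mu=\int_0^1 K^L_{\mu\mu}(x;t)\big(\frac1n\sum_{k=1}^n e_k^2(t)\big)\,dt$ and then invoke the weak uniform density of $\{e_n\}$, applied entrywise with $h$ taken to be each (bounded) matrix entry of $K^L_{\mu\mu}(x;\cdot)$; this yields $\frac1n\sum_{k=1}^n b_k^\mu\to\int_0^1 K^L_{\mu\mu}(x;t)\,dt$. Adding the two limits and summing over $\mu$ produces~(\ref{LLf}).

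The only genuinely delicate step is the vanishing of the Volterra part: it is precisely the Hilbert--Schmidt (compactness) property forced by $K^V\in L_2$ that kills the diagonal Ces\`aro mean which would otherwise survive for a generic bounded second derivative, leaving only the $L_\infty$ L\'evy kernel $K^L$ to contribute. The remaining steps are a direct substitution followed by the definition of weak uniform density.
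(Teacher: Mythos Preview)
Your proof is correct and follows essentially the same route as the paper: split each diagonal term into its Volterra and L\'evy contributions, show the Ces\`aro mean of the Volterra part vanishes because $K^V_{\mu\mu}(x;\cdot,\cdot)\in L_2([0,1]^2)$, and handle the L\'evy part directly via the definition of weak uniform density. The only cosmetic difference is that the paper kills the Volterra contribution by observing that $\{e_k\otimes e_k\}$ is an orthonormal sequence in $L_2([0,1]^2)$ (so $\langle K^V_{\mu\mu},e_k\otimes e_k\rangle\to 0$ by Bessel), whereas you phrase the same fact via the Hilbert--Schmidt norm of the associated integral operator; the two viewpoints are equivalent.
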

\begin{proof}
Since  $\{e_n\}$ is an orthonormal basis in  $L_2([0,1],\mathbb{R})$, the sequence  $\{e_n\otimes e_n\}_{n=1}^{\infty}$ is an orthonormal sequence of functions in $L_2([0,1]\times [0,1],\mathbb{R})$ and
\begin{multline}
\lim_{n\to \infty}\frac 1n \sum_{k=1}^n \sum_{\mu=1}^d\int_0^1K^V_{\mu\mu}(x;s,t) e_k(t)e_k(s)dtds=\\
=\lim_{n\to \infty}\frac 1n \sum_{k=1}^n \sum_{\mu=1}^d<K^V_{\mu\mu}(x),e_k\otimes e_k>
=\lim_{n\to \infty}\sum_{\mu=1}^d <K^V_{\mu\mu}(x),e_n\otimes e_n>=0.
\end{multline}
Then formula~(\ref{LLf}) follows from the fact that the basis   $\{e_n\}$ is  weakly uniformly dense. 
\end{proof}
\begin{remark}
One of the approaches to define the L\'{e}vy Laplacian is as follows. The  value of the L\'{e}vy
Laplacian on a function is defined as an integral functional given by the special form of the second derivative of this function (see~\cite{L1951,F2005}). If the second partial derivatives of the function $f\in C^2(V, M_N(\mathbb{C}))$ have the form~(\ref{f2d}), then the value of the classical L\'{e}vy Laplacian on $f$ can be determined by formula~(\ref{LLf}). If $V_0=C^1([0,1],\mathbb{R})$  and  $V_{0,0}=\{\gamma \in C^1([0,1],\mathbb{R})\colon \gamma(0)=\gamma(1)=0\}$, then  we obtain a generalization of the definition of the L\'{e}vy Laplacian  from  paper~\cite{AGV1994} by Accardi, Gibilisco and Volovich. The L\'evy Laplacian  $\Delta^{\{e_n\},1}_{L}$, where $\{e_n\}$ is a weakly uniformly dense basis, is an extension of the L\'evy Laplacian introduced in~\cite{AGV1994} (see~\cite{VolkovLLI}).
\end{remark}

\section{L\'{e}vy Laplacians and gauge fields}

Below, the Greek indices run through $\{1,\ldots,d\}$. In the paper we use the Einstein summation convention.

Let
$$W^{1,2}_0([0,1],\mathbb{R}^d):=\{\gamma\in AC([0,1],\mathbb{R}^d) \colon \gamma(0)=0, \dot{\gamma}\in L_2((0,1),\mathbb{R}^d)\}.$$
It is a Hilbert space with the scalar product
$$
(\gamma_1,\gamma_2)_{W^{1,2}_0([0,1],\mathbb{R}^d)}=\int_0^1(\dot{\gamma}_1(t),\dot{\gamma}_2(t))_{\mathbb{R}^d}dt.
$$
Choose $V=W^{1,2}_0([0,1],\mathbb{R}^d)$.
 Consider the classical L\'evy Laplacian  $\Delta^{\{h_n\},1}_{L}$ on  $$C^2(W^{1,2}_0([0,1],\mathbb{R}^d),M_N(\mathbb{C})).$$ 
It is this Laplacian that is associated with the  gauge fields.

Let  $A(x)=A_\mu(x)dx^\mu$ be a $C^\infty$-smooth $u(N)$-valued 1-form  on $\mathbb{R}^d$. It determines the connection in the trivial vector bundle over  $\mathbb{R}^d$ with  fibre  $\mathbb{C}^N$ and structure group  $U(N)$. If    $\phi\in C^1(\mathbb{R}^d,u(N))$, its covariant derivative is defined by $\nabla_\mu\phi=\partial_\mu\phi+[A_\mu,\phi]$.
The curvature tensor is  a $u(N)$-valued  2-form    $F(x)=\sum_{\mu<\nu}F_{\mu\nu}(x)dx^\mu\wedge dx^\nu$, where $F_{\mu\nu}=\partial_\mu A_\nu-\partial_\nu A_\mu+[A_\mu,A_\nu]$.
The Yang-Mills equations  on the connection $A$ are
\begin{equation}
\label{YMequations}
\nabla^\mu F_{\mu\nu}=0.
\end{equation}

The parallel transport 
 $U_t^A(\gamma)$  along the path $\gamma\in W^{1,2}_0([0,1],\mathbb{R}^d)$, associated with a connection  $A$, is the solution to the differential equation
 \begin{equation*}
 \label{partransp}
U^{A}_t(\gamma)=I_N-\int_0^t A_{\mu}(\gamma(s))U_s^{A}(\gamma)\dot{\gamma}^\mu(s)ds,
\end{equation*}
where $I_N$ is the identity $N\times N$-matrix. 

\begin{theorem}
\label{GF}
For the parallel transport the following equality holds
\begin{equation}
\Delta^{\{h_n\},1}_LU_1^{A}(\gamma)=-U_1^{A}(\gamma)
\int_0^1U_t^{A}(\gamma)^{-1}\nabla^{\mu}F_{\mu\nu}(\gamma(t))U^{A}_t(\gamma)\dot{\gamma}^\nu(t)dt.
\end{equation}
The connection $A$ satisfies the Yang-Mills equations~(\ref{YMequations}) if and only if
 $$\Delta^{\{h_n\},1}_LU_1^{A}=0.$$
\end{theorem}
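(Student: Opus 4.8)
The plan is to compute the first two Fréchet derivatives of the map $\gamma\mapsto U_1^A(\gamma)$ along $V=W^{1,2}_0([0,1],\mathbb{R}^d)$, to isolate the diagonal (L\'evy) part of the second derivative, and then to read off $\Delta^{\{h_n\},1}_L U_1^A$ by Proposition~\ref{prop2}. First I differentiate the defining equation $\dot U^A_t(\gamma)=-A_\mu(\gamma(t))\dot\gamma^\mu(t)U^A_t(\gamma)$ in a direction $h$. Solving the resulting linear inhomogeneous equation by Duhamel's formula, integrating by parts in the time variable (the boundary terms drop because $\{h_n\}$ lies in the subspace $V_{0,0}$ of directions vanishing at $0$ and $1$), and using $\frac{d}{ds}U^A_s(\gamma)^{-1}=U^A_s(\gamma)^{-1}A_\mu(\gamma(s))\dot\gamma^\mu(s)$, the connection terms recombine into the curvature and one gets
\[
D_hU_1^A(\gamma)=\int_0^1 U_1^A(\gamma)\,U_s^A(\gamma)^{-1}F_{\mu\nu}(\gamma(s))\,U_s^A(\gamma)\,\dot\gamma^\mu(s)\,h^\nu(s)\,ds .
\]
In particular the first derivative pairs with $h$ through the $L_2$ inner product, so $U_1^A\in C^2$ and Definition~\ref{def1} applies.

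For the second derivative I expand $U_1^A(\gamma+\varepsilon w)$ to second order in $\varepsilon$ by the Duhamel (Dyson) series, obtaining one single integral of the second-order coefficient $a_2$ of $A_\mu((\gamma+\varepsilon w)(s))(\dot\gamma^\mu+\varepsilon\dot w^\mu)(s)$ and one iterated integral of the first-order coefficient $a_1$; here $a_1$ carries a term $A_\mu(\gamma(s))\dot w^\mu(s)$ that grows like $n$ when $w=p_\mu h_n$, while $a_2$ carries $\tfrac12\partial_\lambda\partial_\sigma A_\mu\,w^\lambda w^\sigma\dot\gamma^\mu$ and $\partial_\lambda A_\mu\,w^\lambda\dot w^\mu$. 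Setting $w=p_\mu h_n$, summing over $\mu$ and forming the Cesàro mean $\tfrac1n\sum_{k=1}^n$, every iterated integral with a square-integrable kernel paired against $h_k(\cdot)h_k(\cdot)$ vanishes exactly as in the proof of Proposition~\ref{prop2}, the sequence $\{h_k\otimes h_k\}$ being orthonormal in $L_2([0,1]^2)$; these form the Volterra part $K^V$. What survives are the terms of $a_2$ with $w^\lambda w^\sigma=h_n^2$ and with $w^\lambda\dot w^\mu=h_n\dot h_n=\tfrac12(h_n^2)'$, evaluated via the weak uniform density relation $\tfrac1n\sum_{k=1}^n h_k^2\rightharpoonup 1$ (after an integration by parts for the second), together with the pieces of the iterated integral that carry one or two factors $\dot h_n$.

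The heart of the argument, and the step I expect to be the main obstacle, is the evaluation of these $\dot h_n$-bearing iterated integrals. The cross terms (one $h_n$, one $\dot h_n$) are handled by the diagonal-extraction identity $\lim_n\int_0^1\int_0^s G(s,r)\,h_n(s)\dot h_n(r)\,dr\,ds=\int_0^1 G(t,t)\,dt$ and its companion with $\dot h_n(s)h_n(r)$ carrying the opposite sign, while the term quadratic in $\dot h_n$ needs the finer lemma
\[
\lim_{n\to\infty}\int_0^1\!\!\int_0^s G(s,r)\,\dot h_n(s)\dot h_n(r)\,dr\,ds=\tfrac12\int_0^1\bigl(\partial_2G-\partial_1G\bigr)(t,t)\,dt ,
\]
valid for smooth $G$ ($\partial_i$ the derivative in the $i$-th slot), which extracts the diagonal of the path-ordered kernel $G(s,r)=U_1^A\,U_s^A(\gamma)^{-1}A_\mu(\gamma(s))\,U_s^A\,U_r^A(\gamma)^{-1}A_\mu(\gamma(r))\,U_r^A$ (summed over the direction $\mu$) and precisely supplies the commutators $[A_\mu,\,\cdot\,]$ absent from the local terms. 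The remaining work is algebraic: the local contributions by themselves yield $\partial^\mu F_{\mu\nu}$ together with non-covariant pieces such as $\partial_\mu\partial_\mu A_\nu$, $[A_\mu,\partial_\mu A_\nu]$ and boundary values, and one must verify, using $F_{\mu\nu}=\partial_\mu A_\nu-\partial_\nu A_\mu+[A_\mu,A_\nu]$ and $\nabla_\mu=\partial_\mu+[A_\mu,\,\cdot\,]$, that these and the diagonal commutators recombine into $\sum_\mu\nabla_\mu F_{\mu\nu}$ with all boundary terms cancelling. This shows that the second derivative has the form~(\ref{f2d}) with $\sum_\mu K^L_{\mu\mu}(\gamma,t)=-U_1^A(\gamma)U_t^A(\gamma)^{-1}\nabla^\mu F_{\mu\nu}(\gamma(t))U_t^A(\gamma)\dot\gamma^\nu(t)$, whereupon Proposition~\ref{prop2} gives the asserted formula.

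Finally, the Yang--Mills equivalence is immediate in one direction: if $\nabla^\mu F_{\mu\nu}\equiv0$ the integrand vanishes identically, so $\Delta^{\{h_n\},1}_LU_1^A=0$. For the converse I argue by localization: if $\int_0^1 U_t^{A}(\gamma)^{-1}\nabla^\mu F_{\mu\nu}(\gamma(t))U_t^A(\gamma)\dot\gamma^\nu(t)\,dt=0$ for every $\gamma$, then choosing short paths passing through an arbitrary point $x_0$ with arbitrary tangent vector $v$ concentrates the integral near the corresponding parameter $t_0$ and forces $U_{t_0}^A(\gamma)^{-1}\nabla^\mu F_{\mu\nu}(x_0)U_{t_0}^A(\gamma)\,v^\nu=0$; since $U_{t_0}^A(\gamma)$ is invertible and $v$ is arbitrary, this gives $\nabla^\mu F_{\mu\nu}(x_0)=0$ for all $x_0$, i.e.\ the Yang--Mills equations.
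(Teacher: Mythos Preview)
Your approach is correct and will reach the stated formula, but it takes a substantially longer route than the paper's. The paper does not compute any oscillatory integrals in $\dot h_n$: it simply records the closed formula for the second partial derivative $(U_1^A)''_{W_{0,\mu}W_{0,\mu}}$ along the subspace $W_{0,\mu}=\{\gamma\in W_\mu:\gamma(1)=0\}$, observes that it already has the Volterra--L\'evy form~\eqref{f2d} with L\'evy kernel $-U_1^A U_t^{-1}\nabla_\mu F_{\mu\nu}\,U_t\,\dot\gamma^\nu$, and then invokes the argument of Proposition~\ref{prop2}. The reason no $\dot u,\dot v$ terms survive in that formula is precisely the antisymmetry $F_{\mu\mu}=0$: once you have written the \emph{first} derivative in curvature form, differentiating it again in the \emph{same} coordinate direction $p_\mu$ makes the term produced by varying $\dot\gamma^\nu$ hit $F_{\mu\nu}\delta^\nu_\mu=F_{\mu\mu}=0$, while the variations of $U_1^A$, $U_s^A$, $U_s^{A\,-1}$ only bring in further curvature integrals against $u(s)$ (the boundary contribution at $s$ from $D_wU_s^A$ is $-A_\mu(\gamma(s))U_s^A\,u(s)$, again without $\dot u$). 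So the decomposition~\eqref{f2d} falls out from two integrations by parts and the antisymmetry of $F$, and your ``finer lemma'' on the diagonal extraction of $\int_0^1\!\int_0^s G(s,r)\dot h_n(s)\dot h_n(r)\,dr\,ds$, while true, is never needed.

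What your route buys is an explicit, basis-level verification that all pieces of the raw Dyson expansion recombine into $\nabla^\mu F_{\mu\nu}$; what the paper's route buys is that this recombination happens once, at the level of the bilinear form, before any limit is taken. One small logical slip in your write-up: the oscillatory-integral computation you outline already \emph{is} the Ces\`aro limit defining $\Delta^{\{h_n\},1}_L U_1^A$, so it does not ``show that the second derivative has the form~\eqref{f2d}'' and there is nothing left for Proposition~\ref{prop2} to do afterwards; you should either stop once the limit is identified, or, if you want to invoke Proposition~\ref{prop2}, perform the integrations by parts for general $u,v\in W_{0,\mu}$ first and only then specialize to the basis. Also note that $K^L_{\mu\mu}(\gamma;\cdot)$ contains $\dot\gamma^\nu$ and is therefore generically only in $L_2$, not $L_\infty$ as in the hypotheses of Proposition~\ref{prop2}; this is why the paper says the theorem follows ``by analogous arguments'' and defers the details to~\cite{AGV1994,VolkovLLI}. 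Your localization argument for the converse is fine and matches the standard one.
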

\begin{proof}
Let $$W_\mu=\{\gamma=(\gamma^\nu)_{\nu=1}^d \in W^{1,2}_0([0,1],\mathbb{R}^d)\colon \gamma^\nu=0 \text{ if $\nu\neq \mu$}\}$$ and $W_{0,\mu}=\{\gamma\in W_\mu\colon \gamma(1)=0\}$.
The proof of the theorem is based on the fact that the second derivatives of the parallel transport along the spaces $W_{0,\mu}$
have the form:
\begin{multline*}
<(U_1^{A})''_{W_{0,\mu} W_{0,\mu}}(\gamma)u,v>=\\
=U_1^{A}(\gamma)\int_0^1dt\int_0^tdsU_t^{A}(\gamma)^{-1}F_{\mu\nu}(\gamma(t))\dot{\gamma}^\nu(t)U_t^{A}(\gamma)\times\\
\times U_s^{A}(\gamma)^{-1}F_{\mu\lambda}(\gamma(s))\dot{\gamma}^\lambda(s)U_s^{A}(\gamma)(u(t)v(s)+v(t)u(s))-\\
-U_1^{A}(\gamma)
\int_0^1U_t^{A}(\gamma)^{-1}\nabla_{\mu}F_{\mu\nu}(\gamma(t))\dot{\gamma}^\nu(t)U^{A}_t(\gamma)u(t)v(t)dt
\end{multline*}
for all $u,v\in W_{0,\mu}$. 
The assertion of the theorem can be proved by analogous arguments, as in Proposition~\ref{prop2}.
See~\cite{AGV1994} and~\cite{VolkovLLI} for the detailed proof.
\end{proof}

\begin{remark}
Theorem~\ref{GF} was first proved in~\cite{AGV1994} by  Accardi, Gibilisco and  Volovich for the L\'evy Laplacian, defined as the integral functional given by the special kind of the second derivative.
\end{remark}

Let  $D$ be an 
isometric isomorphism between  the Hilbert spaces $W^{1,2}_0([0,1],\mathbb{R}^d)$ and  $L^2([0,1],\mathbb{R}^d)$, defined by differentiation
$$
Dh(t)=\dot{h}(t), h\in W^{1,2}_0([0,1],\mathbb{R}^d).
$$
Then the L\'{e}vy Laplacian  $\Delta^{\{h_n\},1}_L$ on the space $C^2(W^{1,2}_0([0,1],\mathbb{R}^d),M_N(\mathbb{C}))$ and the L\'{e}vy Laplacian $\Delta^{\{l_n\},-1}_{L}$ on the space 
$C^2(L_2([0,1],\mathbb{R}^d),M_N(\mathbb{C}))$ are connected as follows.

\begin{proposition}
If $f\in C^2(W^{1,2}_0([0,1],\mathbb{R}^d),M_N(\mathbb{C}))$, 
then
\begin{equation}
\label{eq1}
\pi^2\Delta^{\{l_n\},-1}_{L} (f\circ D^{-1}) (x)=(\Delta^{\{h_n\},1}_{L}f)(D^{-1}x)
\end{equation}
for all $x\in L^2([0,1],\mathbb{R}^d)$.
\end{proposition}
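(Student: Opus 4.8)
The plan is to reduce both sides of~(\ref{eq1}) to one and the same Ces\`aro mean of the scalars $\langle f''(D^{-1}x)p_\mu h_k, p_\mu h_k\rangle$, by a direct computation that exploits the fact that the differentiation operator $D$ carries the sine basis to the cosine basis up to an explicit scalar factor. First I would record this relationship between the two bases. Since
\begin{equation*}
Dh_k(t)=\dot h_k(t)=\sqrt 2\,\pi k\cos(\pi k t)=\pi k\, l_k(t)
\end{equation*}
for every $k\geq 1$, we have $Dh_k=\pi k\, l_k$, equivalently $D^{-1}l_k=\tfrac{1}{\pi k}h_k$; componentwise in $\mathbb{R}^d$ this gives $D^{-1}(p_\mu l_k)=\tfrac{1}{\pi k}\,p_\mu h_k$. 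Note that the constant element $l_0$ of the cosine basis never enters, because the Ces\`aro sums in Definition~\ref{def1} start at $k=1$.

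Next I would apply the chain rule for the second Fr\'echet derivative to $g:=f\circ D^{-1}$. Because $D^{-1}\colon L_2([0,1],\mathbb{R}^d)\to W^{1,2}_0([0,1],\mathbb{R}^d)$ is a bounded linear isomorphism, differentiating twice yields $\langle g''(x)u,v\rangle=\langle f''(D^{-1}x)D^{-1}u,\,D^{-1}v\rangle$ for all $u,v$. Substituting $u=v=p_\mu l_k$, using the previous step and the bilinearity of the second derivative, I obtain
\begin{equation*}
\langle g''(x)p_\mu l_k,\,p_\mu l_k\rangle=\frac{1}{\pi^2 k^2}\,\langle f''(D^{-1}x)p_\mu h_k,\,p_\mu h_k\rangle .
\end{equation*}

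Finally I would insert this into the defining formula~(\ref{LL}) for $\Delta^{\{l_n\},-1}_{L}$. The weight attached to the $k$-th term of this Laplacian is $k^{1-s}=k^{1-(-1)}=k^2$, which cancels exactly against the factor $\tfrac{1}{\pi^2 k^2}$ above. Hence, for every fixed $n$, the $n$-th partial Ces\`aro sum defining $\pi^2\Delta^{\{l_n\},-1}_{L}g(x)$ equals, term by term, the $n$-th partial sum defining $\Delta^{\{h_n\},1}_{L}f(D^{-1}x)$ (whose weight is $k^{1-1}=1$). Passing to the limit gives~(\ref{eq1}); in particular one side exists precisely when the other does, so the domains correspond under the embedding as well.

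I do not expect a serious obstacle here: the content of the statement is essentially the observation that the order $s=-1$ is calibrated so that its weight $k^2$ absorbs the $k^{-2}$ produced by the isometry $D^{-1}$ acting on $l_k$. The only points that require care are the correct handling of the duality pairing $\langle\cdot,\cdot\rangle$ together with the adjoint action of $D^{-1}$ on the second derivative, and the harmless omission of the $l_0$ term noted above.
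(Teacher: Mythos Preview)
Your proof is correct and is precisely the argument the paper has in mind: the paper's own proof is the single sentence ``The proof follows directly from the chain rule,'' and your computation spells out exactly that chain-rule step together with the relation $Dh_k=\pi k\,l_k$.
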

\begin{proof}
The proof follows directly from the chain rule.
\end{proof}

In the next section we define  the L\'evy Laplacian  $\Delta_L$  in the Malliavin calculus.
This Laplacian is analogue of  the L\'{e}vy-Laplace operator $\Delta^{\{h_n\},1}_L $ on the space  $C^2(W^{1,2}_0([0,1],\mathbb{R}^d),M_N(\mathbb{C}))$. 
The classical L\'{e}vy Laplacian in the Hida calculus is  the analogue  of the L\'{e}vy Laplacian  $\Delta^{\{l_n\},1}_{L}$
 on the space  $C^2(L_2([0,1],\mathbb{R}^d),M_N(\mathbb{C}))$. 
  In the last section we  determine an analogue on the space of  Hida generalized functionals  of the L\'{e}vy Laplacian $\Delta^{\{l_n\},-1}_{L}$  on the space $C^2(L_2([0,1],\mathbb{R}^d),M_N(\mathbb{C}))$. This analogue relates to  the operator $\Delta_L$ in a way similar to~(\ref{eq1}).

\section{L\'{e}vy Laplacian in Malliavin calculus}

Let  $\{b_t\}_{t\in [0,1]}$ be a standard  $d$-dimensional Brownian motion and $(\Omega,\mathcal{F},P)$ be the associated with this process probability space ($\Omega=\{\gamma\in C([0,1],\mathbb{R}^d)\colon \gamma(0)=0\}$, $\mathcal{F}$  is the $\sigma$-algebra  generated by the Brownian motion and  $P$ is the Wiener measure).
The symbols  $db$ and  $\partial b$ denote  the  It\^o differential and the Stratonovich differential  respectively.
The space   $W^{1,2}_0([0,1],\mathbb{R}^d)$ is the Cameron-Martin space (the space of differentiability) of the Wiener measure $P$.

The Sobolev space
$W^{r,p}(P,M_N(\mathbb{C}))$  is the completion of the space of all
 $C^\infty$-smooth cylindrical $M_N(\mathbb{C})$-valued functions with compact support on $\Omega$
 with respect to the Sobolev norm
$$
\|\Phi\|_{r,p}=\sum_{k=0}^r (E(\sum_{i_1\ldots i_k=1}^\infty \|\partial_{g_{i_1}}\ldots\partial_{g_{i_k}}\Phi\|_{M_N(\mathbb{C})}^2)^{p/2})^{1/p},
$$
 where  $\{g_n\}$ is an arbitrary orthonormal basis in $W^{1,2}_0([0,1],\mathbb{R}^d)$ (for various definitions of the Sobolev spaces over the Wiener measure, see, e.g.,~\cite{Bogachev}).
For $p\geq 1$  and for any $h\in W^{1,2}_0([0,1],\mathbb{R}^d)$  the operator of differentiation along the direction $\partial_h$ can be extended by continuity as a continuous linear operator from $W^{1,p}(P,M_N(\mathbb{C}))$ to $L_p(\Omega,P;M_N(\mathbb{C}))$. We denote this extension again by the symbol $\partial_h$.
The second derivative of an element from  $W^{2,p}(P,M_N(\mathbb{C}))$ is defined by analogy.

An analogue of the classical  L\'{e}vy Laplacian $\Delta^{\{h_n\},1}_{L}$ for the Sobolev space
$W^{2,2}(P,M_N(\mathbb{C}))$
 is defined as follows.
\begin{definition}
\label{deltaL}
The L\'{e}vy Laplacian    $\Delta_L$  is a linear mapping from $Dom \Delta_L$ to $L_2(\Omega,P;M_N(\mathbb{C}))$ defined by the formula
\begin{equation}
\label{LaplaceL}
\Delta_Lf(b)=\lim_{n\to\infty}\frac 1n \sum_{k=1}^n \sum_{\mu=1}^d\partial_{p_\mu h_k}\partial_{p_\mu h_k}f(b), 
\end{equation}
where the sequence in the right  side of~(\ref{LaplaceL})  convergences strongly in  $L_2(\Omega,P;M_N(\mathbb{C}))$ and 
$Dom \Delta_L$  consists of all  $f\in W^{2,2}(P,M_N(\mathbb{C}))$ for which the right side of~(\ref{LaplaceL}) exists.
\end{definition}

The stochastic parallel transport 
 $U^{A}(b,t)$, associated   with the connection $A$, is a solution to the stochastic equation
 in the  sense of Stratonovich:
\begin{equation*}
\label{partransp}
U^{A}(b,t)=I_N-\int_0^t A_{\mu}(b_s)U^{A}(b,s)\partial b^{\mu}_s.
\end{equation*}
If $A$ and its partial derivatives of the first and the second order are bounded, this equation has the unique strong solution.

\begin{theorem}
\label{thm1}
Let a connection $A$ be  bounded together with all its partial derivatives up to the third
order inclusive. 
Then for the stochastic parallel transport the following holds
\begin{multline}
\label{laplU}
\Delta_LU^{A}(b,1)=U^{A}(b,1)\int_0^1 U^{A}(b,t)^{-1}F_{\mu\nu}(b_t)F^{\mu\nu}(b_t)U^{A}(b,t)dt-
\\-U^{A}(b,1)\int_0^1U^{A}(b,t)^{-1}\nabla^{\mu}F_{\mu\nu}(b_t)U^{A}(b,t)d b^\nu_t.
$$
\end{multline}
The connection  $A$ is a solution to the Yang-Mills equations~(\ref{YMequations}) if and only if for the stochastic parallel transport the following holds
 $$\Delta_LU^{A}(b,1)=U^{A}(b,1)\int_0^1 U^{A}(b,t)^{-1}F_{\mu\nu}(b_t)F^{\mu\nu}(b_t)U^{A}(b,t)dt.$$
\end{theorem}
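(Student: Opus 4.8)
The plan is to follow the strategy behind the deterministic Theorem~\ref{GF} and Proposition~\ref{prop2}, but with the Malliavin (Cameron--Martin) derivative in place of the ordinary variation and with careful attention to the It\^o corrections, which are exactly what produces the extra curvature-squared term absent in the deterministic case. First I would check that $U^{A}(b,1)$ belongs to $W^{2,2}(P,M_N(\mathbb{C}))$ and in fact to $Dom\,\Delta_L$; the hypothesis that $A$ and its derivatives up to third order are bounded is what guarantees that the first and second derivatives of the parallel transport lie in the relevant $L_2$ spaces and that the Ces\`aro means in~(\ref{LaplaceL}) converge strongly. Next I would compute the first derivative along $p_\mu\eta$ with $\eta=h_k$. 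Differentiating the Stratonovich equation for $U^{A}(b,t)$ under the shift $b\mapsto b+\epsilon p_\mu\eta$ gives an inhomogeneous linear equation whose solution, by variation of parameters, contains both an $\eta$-term and a $\dot\eta$-term. Integrating the $\dot\eta$-term by parts (the boundary contributions vanish since $h_k(0)=h_k(1)=0$) and using $d(U^{-1})=U^{-1}A_\lambda\,\partial b^\lambda$ together with $\partial_\lambda A_\mu+[A_\lambda,A_\mu]=\partial_\mu A_\lambda-F_{\mu\lambda}$, the two contributions collapse to the curvature and one obtains
\[
\partial_{p_\mu\eta}U^{A}(b,1)=-U^{A}(b,1)\int_0^1 U^{A}(b,s)^{-1}F_{\mu\lambda}(b_s)U^{A}(b,s)\,\eta(s)\,\partial b^\lambda_s.
\]

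Differentiating once more in the same direction and using $\partial_{p_\mu\eta}U^{-1}=-U^{-1}(\partial_{p_\mu\eta}U)U^{-1}$, the second derivative splits into two kinds of terms: genuine iterated stochastic integrals in which the two copies of $\eta$ sit at distinct times (the Volterra part, with integrand symmetric in $\eta(t)\eta(s)$), and diagonal terms of the form $\int_0^1(\cdots)\eta(s)^2\,ds$ and $\int_0^1(\cdots)\eta(s)^2\,\partial b^\lambda_s$ arising from the quadratic variation produced whenever two stochastic integrals are multiplied or a Stratonovich integral is rewritten in It\^o form. The naive diagonal contribution from shifting $\partial b^\mu$ drops out because $F_{\mu\mu}=0$, so the surviving diagonal comes entirely from the It\^o corrections: the product of the first-derivative integral with itself yields, after conjugation, $U^{-1}F_{\mu\lambda}F_{\mu\lambda}U$, while the corrections arising when the conjugated curvature is differentiated and the outer integral is put in It\^o form assemble $\partial_\mu F_{\mu\lambda}+[A_\mu,F_{\mu\lambda}]=\nabla_\mu F_{\mu\lambda}$ against $db^\lambda$.

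It then remains to take the Ces\`aro mean $\frac1n\sum_{k=1}^n$ with $\eta=h_k$. The Volterra part vanishes exactly as in Proposition~\ref{prop2}: since $\{h_k\otimes h_k\}$ is orthonormal in $L_2([0,1]^2)$, the iterated-integral terms tend to $0$, here in $L_2(\Omega,P)$, which follows by computing the second moment and using the near-orthogonality of the kernels together with the uniform bounds on $A$. The diagonal part survives by weak uniform density of $\{h_k\}$: using $h_k^2=1-\cos 2\pi k t$ one has $\frac1n\sum_{k=1}^n h_k(s)^2\to 1$ boundedly and for almost every $s$, so by dominated convergence and the It\^o isometry $\frac1n\sum_k\int_0^1(\cdots)h_k(s)^2\,ds\to\int_0^1(\cdots)\,ds$, and likewise for the $db^\lambda$ term, both in $L_2(\Omega,P)$. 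Summing over $\mu$ produces $F_{\mu\nu}F^{\mu\nu}$ and $\nabla^\mu F_{\mu\nu}$ and yields~(\ref{laplU}). The Yang--Mills equivalence is immediate in one direction; for the converse, equality of $\Delta_LU^{A}(b,1)$ with the curvature-squared term alone forces $\int_0^1 U^{-1}\nabla^\mu F_{\mu\nu}U\,db^\nu=0$ in $L_2$, hence by the It\^o isometry and invertibility of $U^{A}$ one gets $\nabla^\mu F_{\mu\nu}(b_t)=0$ for almost every $t$ almost surely, and since the law of $b_t$ has full support this gives $\nabla^\mu F_{\mu\nu}\equiv 0$.

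The main obstacle is the second-derivative computation: disentangling the diagonal It\^o corrections from the Volterra iterated integrals and checking that they combine into precisely $F_{\mu\nu}F^{\mu\nu}\,dt$ and $\nabla^\mu F_{\mu\nu}\,db^\nu$ with the correct signs and integrators. This is the step where the stochastic case genuinely departs from the deterministic Theorem~\ref{GF}, whose analogous formula has no curvature-squared term, and it is where essentially all of the careful stochastic bookkeeping resides.
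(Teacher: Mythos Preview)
The paper does not actually prove this theorem: after the statement it simply writes ``For the proof see~\cite{Volkov2017}.'' So there is no in-paper argument to compare your proposal against.

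That said, your outline is the right shape and is consistent with what one would expect the cited proof to do, namely transplant the deterministic strategy of Theorem~\ref{GF}/Proposition~\ref{prop2} into the Sobolev--over--Wiener setting. Your first-variation formula for $U^A(b,1)$ is the standard one, and your plan to split the second derivative into an off-diagonal (Volterra) iterated stochastic integral and a diagonal part, then kill the former under the Ces\`aro mean via orthonormality of $\{h_k\otimes h_k\}$ while the latter survives by weak uniform density, mirrors the deterministic argument exactly. One point of imprecision: you attribute the $\nabla_\mu F_{\mu\nu}\,db^\nu$ contribution entirely to It\^o corrections, but in fact this diagonal piece already appears in the deterministic second derivative (it is the L\'evy part displayed in the proof of Theorem~\ref{GF}); the genuinely new stochastic ingredient is only the $F_{\mu\nu}F^{\mu\nu}\,dt$ term, which arises from the quadratic-variation diagonal of what in the deterministic case is a pure Volterra iterated integral. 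Your identification of the converse direction via the It\^o isometry and full support of the Wiener law is correct. The real labor, as you note, is in the second-derivative bookkeeping and the $L_2$-convergence estimates, and those details live in~\cite{Volkov2017} rather than in this paper.
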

For the proof see~\cite{Volkov2017}.

\begin{remark}
In the paper~\cite{LV2001} by  Leandre and  Volovich, the L\'{e}vy Laplacian  has been  introduced on the Sobolev space over the Wiener measure 
on the space of paths in a compact Riemannian manifold. This Laplacian has been defined as an integral functional given by the special form of the second derivative. The value of such  a L\'{e}vy Laplacian on the stochastic parallel transport does not have the first term in the right-hand side of~(\ref{laplU}). Thus, for this L\'{e}vy Laplacian the theorem on the equivalence of the Yang-Mills equations and the Laplace equation for the L\'{e}vy Laplacian is satisfied.
It would be interesting to study the relationship between the Laplacian given by Definition~\ref{deltaL} and the Laplacian introduced by  Leandre and Volovich.
\end{remark}

\begin{remark}
In the paper~\cite{Volkov2017}, a divergence corresponding to the Laplacian $\Delta_L$ has been introduced. It has been shown that a stochastic parallel transport is a solution to an equation containing such a divergence if and only if the associated connection is a solution to the Yang-Mills equations. The resulting equation for the stochastic parallel transport is an analogue of the equation of motion of chiral fields (cf.~\cite{AV1981}).
\end{remark}

\begin{remark}
It would be interesting to investigate whether it is possible to use the Levy-Laplacian approach in some   areas connected to the theory of gauge fields (see, e.g.,~\cite{ABT,Sergeev,MarShir,Zharinov,Katanaev}).
\end{remark}

At the end of this section, we give some general information about the  Fock spaces and the Wiener-Ito-Segal isomorphism, which are needed for the proof of the theorem in the next section.

Let $\frak H$ be  a separable Hilbert space. 
  The symbol $\widehat{\otimes}$ denotes the symmetric tensor product.
If  $F\in \frak{H}^{\widehat{\otimes}n}$  and  $f\in \frak{H}^{\widehat{\otimes}k}$,  than the  contraction  $F\widehat{\otimes}_kf$ is  an  element from $\frak{H}^{\widehat{\otimes}(n-k)}$ such
that  for any  $h\in \frak{H}^{\widehat{\otimes}k}$ holds $<F,h\widehat{\otimes} f>=<F\widehat{\otimes}_k f,h>$.
The following estimates for the norms hold (see, e.g.,~\cite{Obata1994})
\begin{equation}
\label{tens1}
\|F\widehat{\otimes}f\|_{\frak{H}^{\widehat{\otimes}(n+k)}}\leq \|F\|_{\frak{H}^{\widehat{\otimes}n}}\|f\|_{\frak{H}^{\widehat{\otimes}k}},
\end{equation}
\begin{equation}
\label{tens2}
\|F\widehat{\otimes}_k f\|_{\frak{H}^{\widehat{\otimes}(n-k)}}\leq \|F\|_{\frak{H}^{\widehat{\otimes}n}}\|f\|_{\frak{H}^{\widehat{\otimes}k}},  \text{\, if $n\geq k$.}
\end{equation}
The  (boson) Fock space $\Gamma(\frak{H})$ over  $\frak{H}$ is a Hilbert space with  the Hilbert norm $\|\cdot\|_{\Gamma(\frak{H})}$, defined by 
$$
\Gamma(\frak{H})=\{f=(f_n)_{n=0}^\infty;\; f_n\in
\frak{H}^{\widehat{\otimes}n}, \|f\|_{\Gamma(\frak{H})}=\sum_{n=0}^\infty
n!|f_n|_{\frak{H}^{\widehat{\otimes}n}}^2<\infty\}.
$$  
 The tensor product 
 $M_N(\mathbb{C})\otimes\Gamma(\frak{H})$ and the associated Hilbert norm  
are of the form
\begin{multline*} 
M_N(\mathbb{C})\otimes\Gamma(\frak{H})=\\=\{F=(F^n)_{n=0}^\infty;\; F^n\in M_N(\mathbb{C})\otimes
\frak{H}^{\widehat{\otimes}n}, \|F\|_{M_N(\mathbb{C})\otimes\Gamma(\frak{H})}=\sum_{n=0}^\infty
n!|F^n|_{M_N(\mathbb{C})\otimes\frak{H}^{\widehat{\otimes}n}}^2<\infty\}.
\end{multline*}

Let  $T_n=T_n(\mathbb{R}^d,M_N(\mathbb{C}))$ be the space of all $M_N(\mathbb{C})$-valued tensors of type  $(0,n)$ on
$\mathbb{R}^d$, endowed with the standard structure of a Hilbert space.\footnote{If $F^n\in T_n$, then $\|F^n\|^2_{T_n}=\sum_{i_1=1}^d \ldots \sum_{i_n=1}^d tr(F^n_{i_1i_2\ldots i_n}(F^n_{i_1i_2\ldots i_n})^\ast)$} Let  the space   $L^{sym}_2([0,1]^n,T_n)$ consists of all functions  $F^n\in L_2([0,1]^n,T_n)$, for which the following holds 
$$
F^n_{i_{\sigma (1)}i_{\sigma (2)}\cdots i_{\sigma (k)}}(t_{\sigma (1)},t_{\sigma (2)},\ldots, t_{\sigma (n)} )=F^n_{i_1i_2\cdots i_k}(t_1,t_2,\ldots, t_n),
$$
where $\sigma$ is a permutation of the first $n$ natural numbers.
If $\frak{H}=L_2([0,1],\mathbb{R}^d)$, 
then  $M_N(\mathbb{C})\otimes \frak{H}^{\widehat{\otimes}n}$ coincides with the space  $L^{sym}_2([0,1]^n,T_n)$.

The  unitary Wiener-Ito-Segal isomorphism  $\mathcal{J}_1$ between $M_N(\mathbb{C})\otimes \Gamma(H) $ and $L_2(\Omega,P;M_N(\mathbb{C}))$ acts as follows. If $F=(F^n)_{n=0}^\infty\in  M_N(\mathbb{C}) \otimes \Gamma(H)$, then
\begin{equation}
\label{WIS}
\mathcal{J}_1(F)=\sum_{n=0}^\infty I_n(F^n),
\end{equation}
where $I_0(F^0)=F^0$,
\begin{equation*}
I_n(F^n)
=\sum_{i_1=1}^d \ldots \sum_{i_n=1}^d n!\int_0^1\int_0^{t_n}\ldots \int_0^{t_2}F^n_{i_1\ldots i_n}(t_1,\ldots t_n)db^{i_1}_{t_1}\ldots db^{i_n}_{t_n}
\end{equation*}
and the series~(\ref{WIS}) converges strongly in $L_2(\Omega,P;M_N(\mathbb{C}))$.
It is known (see, e.g.,~\cite{Nualart,Bogachev}) that  if  $\mathcal{J}_1(F)\in W^{1,2}(P,M_N(\mathbb{C}))$ and $h\in W^{1,2}_0([0,1],\mathbb{R}^d)$, then
\begin{equation}
\label{derivative}
\partial_h(\mathcal{J}_1(F))=\sum_{n=1}^\infty I_{n-1}(F^n\widehat{\otimes}_1 \dot{h}).
\end{equation}

\section{L\'{e}vy Laplacians in Hida calculus}

 We need to generalize Definition~\ref{def1} from Section 1 to determine the L\'{e}vy Laplacian on the space of generalized Hida functionals.

Let  $T$ be an interval in $\mathbb{R}$ and  $[0,1]\subset T$. 
Let   $\mathcal V_0$  be a complex LCS continuously embedded in  $L_2(T,\mathbb{C})$  in  such a way that the image  of $\mathcal V_0$  under the embedding is dense in $L_2(T,\mathbb{C})$. Let $\mathcal V=\mathbb{C}^d \otimes_\pi \mathcal V_0$
and $H_\mathbb{C}=L_2(T,\mathbb{C}^d)$.
Let  $C^2_{L}(\mathcal V,M_N(\mathbb{C}))$ be the space of all two times  Fr\'{e}chet complex differentiable  $M_N(\mathbb{C})$-valued functions on $\mathcal V$ (for  the  definition of the  Fr\'{e}chet complex differentiability see, e.g.,~\cite{SmShk}), the second derivative of which has the form
\begin{equation*}
\label{Lfunctional}
<f''(x)u,v>=\int_{T}\int_{T} K^V_{\mu\nu}(x;s,t)u^\mu(t)v^\nu(s)dtds
+\int_{T}K^L_{\mu\nu}(x;t)u^\mu(t)v^\nu(t)dt
\end{equation*}
for all $u,v \in \mathcal V$, where  $K^V(x;\cdot,\cdot)\in L_2^{sym}(T\times T,T_2))$, $K^L_{\mu \nu}(x;\cdot)\in L_{\infty}(T,M_N(\mathbb{C}))$ and $K^L_{\mu\nu}=K^L_{\nu \mu}$ ($K^V$ is the Volterra part and $K^L$ is the L\'{e}vy part as in Section 1). 
Then for each $x\in \mathcal V$ it is possible to extend by continuity $f''(x)$  to the  element  from $M_N(\mathbb{C})\otimes L(H_\mathbb{C}, H_\mathbb{C})$, which we will denote by the same symbol $f''(x)$.
Let  $\{e_n\}$ be an orthonormal basis in  $L_2([0,1],\mathbb{R})$, 
whose elements may no  belong to the space $\mathcal V_0$. (We identify
$e_n$ with the function from $ L_2(T,\mathbb{R})$ that is equal to $e_n(t) $ for $t\in[0,1]$ and is equal to zero for $t\in T\setminus [0,1]$.)  For $s=\{-1,1\}$ the definition of the L\'{e}vy Laplacian $\Delta^{\{e_n\},s}_{L}$  can be generalized as follows.
\begin{definition}
 The L\'{e}vy Laplacian  $\Delta^{\{e_n\},s}_{L}$ of order  $s\in \{-1,1\}$ is a linear mapping from  $Dom \Delta^{\{e_n\},s}_{L}$ to the space of all $M_N(\mathbb{C})$-valued functions on $\mathcal V$ defined by:
\begin{equation}
\label{LLL1}
\Delta^{\{e_n\},s}_{L}f(x)=\lim_{n\to\infty}\frac 1n \sum_{k=1}^n\sum_{\mu=1}^d k^{1-s}<f''(x)p_\mu e_k,p_\mu e_k>,
\end{equation}
where  $Dom  \Delta^{\{e_n\},s}_{L}$ is the space of all functions   $f\in C^2_{L}(\mathcal V,M_N(\mathbb{C}))$, for which the right side of~(\ref{LLL1}) exists for any $x\in \mathcal V$. 
\end{definition}

Let $\{\textbf{e}_n\}$ be an orthonormal basis in $L_2(T,\mathbb{R})$.
 Let  $\textbf{A}$ be a self-adjoint operator on $H_\mathbb{C}$ acting  by the formula:
$$
\textbf{A}(\textbf{e}_n\otimes p_\mu)=\lambda_n (\textbf{e}_n\otimes p_\mu),
$$
where $\{\lambda_n\}$ 
is an increasing sequence of real numbers such that
$$
1<\lambda_1\leq\lambda_2\leq\ldots\leq\lambda_n\leq \ldots\;\text{and}\;\sum_{k=1}^\infty\lambda_k^{-2}<\infty.
$$
Then the operator   $\textbf{A}^{-1}$  is a Hilbert-Schmidt operator.
For any  $p\geq 0$ the Hilbert space $E_p$ with the  Hilbert norm $|\cdot|_p$ is defined by
$$
E_p=\{\xi \in H_\mathbb{C}\colon |\xi|^2_p=\sum_{k=1}^\infty \sum_{\mu=1}^d\lambda_k^{2p}|(\xi,\textbf{e}_k\otimes p_\mu)_{H_\mathbb{C}}|^2< \infty\}.
$$
For any  $p<0$ the norm $|\cdot|_p$  is defined on all $H_\mathbb{C}$  by 
$$
|\xi|^2_p=\sum_{k=1}^\infty \sum_{\mu=1}^d\lambda_k^{2p}|(\xi,\textbf{e}_k\otimes p_\mu)_{H_\mathbb{C}}|^2.
$$
For   $p<0$ the Hilbert space  $E_p$ is the completion of $H_\mathbb{C}$ with respect to this norm.
For all $p\in \mathbb{R}$  we denote  the Hilbert norm on  $E_p^{\widehat{\otimes}n}$ also by the symbol $|\cdot|_p$. The space $E_0$ coincides with $H_\mathbb{C}$. Below we denote the Hilbert norm on $M_N(\mathbb{C})\otimes H_\mathbb{C}^{\widehat{\otimes}n}$    also by the symbol $|\cdot |_0$.
This will not lead to any  confusion.

Denote the projective limit  $\projlim_{p\to +\infty}E_p$ by  $E_\mathbb{C}$. The space $E_\mathbb{C}$ is a nuclear Fr\'{e}chet space  and, hence, is a reflexive space. Its conjugate space
$E_\mathbb{C}^\ast$ is the inductive limit $\indlim_{p\to +\infty}E_{-p}$.  
We obtain the complex rigged  Hilbert space:
$$
E_\mathbb{C}\subset H_\mathbb{C}\subset E^\ast_\mathbb{C}.
$$

Using the restriction of the operator $\textbf{A}$ to $H_\mathbb{R}=L_2(T,\mathbb{R}^d)$ 
in a similar way we obtain the real rigged Hilbert space:
$$
E_\mathbb{R}\subset H_\mathbb{R}\subset E^\ast_\mathbb{R}.
$$

The norms on the spaces $\Gamma(E_p)$ and $M_N(\mathbb{C})\otimes \Gamma(E_p)$ are denoted by the symbol $\|\cdot\|_p$.
Denote the projective limit    $\projlim_{p\to +\infty}\Gamma(E_p)$ by the symbol  $\mathcal E$.
Then $\mathcal E^\ast=\indlim_{p\to +\infty}
\Gamma(E_{-p})$.
The space  $M_N(\mathbb{C})\otimes_\pi \mathcal E=\projlim_{p\to +\infty}M_N(\mathbb{C})\otimes\Gamma(E_p)$  is the space of  $M_N(\mathbb{C})$-valued Hida  test functionals (white noise test functionals) and the space  $(M_N(\mathbb{C})\otimes_\pi \mathcal E)^\ast$ is the space of  $M_N(\mathbb{C})$-valued Hida generalized functionals (white noise generalized functionals).
Denote the canonical bilinear form on $(M_N(\mathbb{C})\otimes_\pi\mathcal E)^\ast\times (M_N(\mathbb{C})\otimes_\pi \mathcal E)$ by the symbol $\langle\langle \cdot,\cdot\rangle\rangle$.

\begin{remark}
If $T=\mathbb{R}$ and
$$
\textbf{A}=1+t^2-\frac {d^2}{dt^2},
$$
then $E_\mathbb{C}=S(\mathbb{R},\mathbb{C}^d)$  is the Schwartz space of rapidly decreasing functions  and
$E^\ast_\mathbb{C}=S^\ast(\mathbb{R},\mathbb{C}^d)$ is the space of generalized functions of slow growth. The case, where $T=[0,1]$ and $\{\textbf{e}_n\}$ is a basis consisting of trigonometric
functions, is also often considered  (see~\cite{AJS2009,AJS2011,AJS2013,Volkov2013}).
\end{remark}

By the  Minlos-Sazonov theorem there is a Gaussian probability measure $\mu_I$ on $\sigma$-algebra generated by $E_\mathbb{R}$-cylindrical sets on $E_\mathbb{R}^\ast$ such that its Fourier transform has the form
$\widetilde{\mu_I}(\xi)=e^{-\frac{(\xi,\xi)_{H_\mathbb{R}}}{2}}$.
An element from $\mathcal E$
$$
\psi_\xi=(1,\xi,\frac{\xi^{\otimes 2}}{2},\ldots,\frac{\xi^{\otimes
n}}{n!},\ldots), 
$$
where $\xi\in E_\mathbb{C}$, is called a coherent state. 

The unitary Wiener-Ito-Segal isomorphism $j_2$ between  $\Gamma(H_\mathbb{C})$ and
$L_2(E_\mathbb{R}^\ast,\mu_I;\mathbb{C})$ is uniquely determined by the values on coherent states
$$
j_2(\psi_\xi)(x)=e^{\langle
x,\xi\rangle-{\langle\xi,\xi\rangle}/{2}}.
$$
The complex rigged Hilbert space
$$
\mathcal E\subset \Gamma(H_\mathbb{C})\cong L_2(E_\mathbb{R}^\ast,\mu_I,\mathbb{C})\subset \mathcal E^\ast
$$ 
is called  Hida-Kubo-Takenaka space.
 The symbol $\mathcal{J}_2$ denotes the  unitary isomorphism  between $M_N(\mathbb{C})\otimes\Gamma(H_\mathbb{C})$ and 
 $L_2(E_\mathbb{R}^\ast,\mu_I;M_N(\mathbb{C}))$, generated  by $j_2$.
We will not distinguish  $F\in M_N(\mathbb{C})\otimes\Gamma(H_\mathbb{C})$
and $J_2(F)\in L_2(E_\mathbb{R}^\ast,\mu_I;M_N(\mathbb{C}))$ and respectively  the spaces
$M_N(\mathbb{C})\otimes\Gamma(H_\mathbb{C})$ and $L_2(E_\mathbb{R}^\ast,\mu_I;M_N(\mathbb{C}))$. 
 
\begin{remark}
Let  $\nu_I$ be the Gaussian measure with  zero mean and  identity correlation operator
on  $L_2(T,\mathbb{R}^d)$.  This measure is not $\sigma$-additive. 
The measure $\mu_I$ is the image of a measure 
$\nu_I$ under the embedding of  $L_2(T,\mathbb{R}^d)$ into $E_\mathbb{R}^\ast$. Let  $T=[0,1]$.  Let  the measure $\nu_I\circ D$ be the image of the measure $\nu_I$ under $D^{-1}$. Then the Wiener measure $P$ is the image of the measure   $\nu_I\circ D$  under the embedding of   $W^{1,2}_0([0,1],\mathbb{R}^d)$ into $\Omega$.
\end{remark} 
 
The Hida generalized functional  $\Phi\in\mathcal E^\ast$ can be formally written in the form (see~\cite{Obata1994,K})
\begin{equation}
\label{WSI}
\Phi=\sum_{n=0}^\infty<:x^{\otimes n}:,F^n>,
\end{equation}
where $:x^{\otimes n}:$ is the Wick tensor of order $n$, and   $F^n\in M_N\otimes_\pi(E_\mathbb{C}^{\widehat{\otimes}_\pi n})^\ast$ (the space $(E_\mathbb{C}^{\widehat{\otimes}_\pi n})^\ast$ coincides with $\indlim_{p\to +\infty}
E_{-p}^{\widehat{\otimes} n}$).

The $S$-transform of a generalized functional   $\Phi\in M_N(\mathbb{C})\otimes_\pi\mathcal E^\ast$  is the function $S\Phi\colon E_\mathbb{C} \to M_N(\mathbb{C})$ defined by  $$(S\Phi(\xi),\overline{M})_{M_N(\mathbb{C})}=\langle\langle\Phi,M\otimes\psi_{\xi}\rangle\rangle,$$ for all $\xi\in
E_\mathbb{C}$ and $M\in M_N(\mathbb{C})$. The generalized Hida functional is uniquely determined by its $S$-transform.
If $\Phi\in M_N(\mathbb{C})\otimes_\pi\mathcal E^\ast$ has the form~(\ref{WSI}),
the following holds
\begin{equation}
\label{Stranf}
S\Phi(\xi)=\sum_{n=0}^\infty<F^n,\xi^{\otimes n}>.
\end{equation}
A $M_N(\mathbb{C})$-valued function  $G$ on $E_\mathbb{C}$ is the $S$-transform of some $\Phi\in \mathcal E^\ast$ if and only if  (see, e.g.,~\cite{K,Obata1994})
\begin{enumerate}
\item for any  $\zeta,\eta \in E_\mathbb{C}$ the function  $G_{\zeta,\eta}(z)=G(z\eta+\zeta)$ is entire;
\item there exist  $C,K>0$ and  $p\in \mathbb{R}$ such that for all $\xi\in E_\mathbb{C}$ 
the following estimate holds
$$
\|G(\xi)\|_{M_N(\mathbb{C})}\leq C e^{K|\xi|_p^2}.
$$
\end{enumerate}
A $M_N(\mathbb{C})$-valued function on $E_\mathbb{C}$, that  satisfies conditions 1 and 2,
is called  $U$-functional. The symbol $\mathcal{F}_U$ denotes the space of  
$U$-functionals.

\begin{remark}
For the detailed theory of the vector-valued Hida distributions see~\cite{Obata1994}. 
\end{remark}

\begin{definition}
The domain of the L\'{e}vy Laplacian  $\widetilde{\Delta}_L^{\{e_n\},s}$ of order $s\in\{-1,1\}$,
generalized by the orthonormal basis $\{e_n\}$ in $L_2([0,1],\mathbb{R})$, is the space $Dom \widetilde{\Delta}_L^{\{e_n\},s}=\{\Phi\in(M_N(\mathbb{C})\otimes_\pi \mathcal E)^\ast \colon S\Phi\in Dom  \Delta^{\{e_n\},s}_{L}, \Delta^{\{e_n\},s}_L S\Phi \in \mathcal{F}_U\}$. The L\'{e}vy Laplacian $\widetilde{\Delta}_L^{\{e_n\},s}$  is a linear mapping from  $Dom \widetilde{\Delta}_L^{\{e_n\},s}$ to $M_N(\mathbb{C}) \otimes_\pi  \mathcal E^\ast$ defined by
\begin{equation}
\label{LLPhi}
\widetilde{\Delta}^{\{e_n\},s}_L\Phi=S^{-1} \Delta^{\{e_n\},s}_{L} (S\Phi).
\end{equation}
\end{definition}

\begin{remark}
In the present paper we consider only L\'evy Laplacians of orders $(-1)$ and $1$.  The chain of the exotic L\'evy Laplacians generated by a basis of trigonometric functions in the Hida calculus has been considered in papers~\cite{AJS2009,AJS2011,AJS2013}. Its extension for negative orders has been considered in the paper~\cite{Volkov2013}. In fact, the L\'evy Laplacian of order $(-1)$ was first considered in~\cite{AS2007}.
\end{remark}

The following fact is known. The proof, which we give,  is taken from Theorem 6.42 from~\cite{HKPS} 
with minor modifications.

\begin{proposition}
\label{prop1}
If $\Phi\in L_2(E^\ast_{\mathbb{R}},\mu_I;M_N(\mathbb{C}))$,then $S\Phi \in C^2_{L}(E_\mathbb{C},M_N(\mathbb{C}))$ and the L\'{e}vy part of  $S\Phi''$ vanishes, i.e.
$K^L_{\mu\nu}=0$. Hence, if $\{e_n\}$ is  weakly uniformly dense basis, then $\widetilde{\Delta}^{\{e_n\},1}_L\Phi=0$.
\end{proposition}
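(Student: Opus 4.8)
The plan is to compute the $S$-transform of $\Phi$ term by term from its chaos expansion and to read off its second derivative as a genuine $L_2$-kernel operator, which by definition carries no L\'evy (diagonal) part.

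First I would pass through the Wiener--It\^o--Segal isomorphism $\mathcal{J}_2$ and identify $\Phi$ with $F=(F^n)_{n=0}^\infty\in M_N(\mathbb{C})\otimes\Gamma(H_\mathbb{C})$, so that $F^n\in M_N(\mathbb{C})\otimes H_\mathbb{C}^{\widehat{\otimes}n}=L_2^{sym}(T^n,T_n)$ and $\sum_n n!\,|F^n|_0^2<\infty$. By~(\ref{Stranf}) the $S$-transform is $S\Phi(\xi)=\sum_{n\ge0}\langle F^n,\xi^{\otimes n}\rangle$. Using the bound $|\langle F^n,\xi^{\otimes n}\rangle|\le |F^n|_0\,|\xi|_0^{\,n}$ together with a Cauchy--Schwarz estimate against the weights $n!$, I would show that this series and its formal termwise first and second derivatives converge absolutely and uniformly on bounded subsets of $H_\mathbb{C}$. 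This simultaneously reproves that $S\Phi$ is a $U$-functional and legitimizes the termwise differentiation, so that $S\Phi$ is twice (complex) Fr\'echet differentiable.

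The heart of the argument is the second derivative. Differentiating termwise gives, for $u,v\in E_\mathbb{C}$,
\[
\langle (S\Phi)''(\xi)u,v\rangle=\sum_{n\ge2}n(n-1)\langle F^n,\xi^{\otimes(n-2)}\widehat{\otimes}u\widehat{\otimes}v\rangle=\langle G^2(\xi),u\widehat{\otimes}v\rangle,
\]
where $G^2(\xi)=\sum_{n\ge2}n(n-1)\,F^n\widehat{\otimes}_{n-2}\xi^{\otimes(n-2)}$. By the contraction bound~(\ref{tens2}) one has $|F^n\widehat{\otimes}_{n-2}\xi^{\otimes(n-2)}|_0\le |F^n|_0\,|\xi|_0^{\,n-2}$, and a further Cauchy--Schwarz estimate against the weights $n!$ shows that $G^2(\xi)$ converges in $M_N(\mathbb{C})\otimes H_\mathbb{C}^{\widehat{\otimes}2}=L_2^{sym}(T\times T,T_2)$. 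Hence $(S\Phi)''(\xi)$ is exactly the Volterra (Hilbert--Schmidt integral) operator with the $L_2$-kernel $G^2(\xi)$, so $S\Phi\in C^2_L(E_\mathbb{C},M_N(\mathbb{C}))$ with $K^V_{\mu\nu}(\xi;s,t)=G^2_{\mu\nu}(\xi;s,t)$ and with vanishing diagonal part $K^L_{\mu\nu}=0$. This last step is the main point: a bounded bilinear form whose kernel lies in $L_2(T\times T)$ has no multiplication-operator (diagonal) contribution, and it is precisely the hypothesis $\Phi\in L_2$ --- which forces each $F^n$ to be an honest square-integrable symmetric function rather than a distribution --- that keeps $G^2(\xi)$ inside $L_2^{sym}$. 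For a general Hida distribution the contraction would leave $H_\mathbb{C}^{\widehat{\otimes}2}$ and a nonzero $K^L$ could appear.

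For the final assertion I would invoke the mechanism of Proposition~\ref{prop2}, transported to the present complex setting. Since $K^L_{\mu\nu}=0$, only the Volterra part survives in~(\ref{LLL1}) for $s=1$, where $\langle (S\Phi)''(\xi)p_\mu e_k,p_\mu e_k\rangle=\langle K^V_{\mu\mu}(\xi),e_k\otimes e_k\rangle$ (the $e_k$, extended by zero outside $[0,1]$, being paired against the kernel restricted to $[0,1]^2$). Because $\{e_k\otimes e_k\}$ is an orthonormal system in $L_2([0,1]\times[0,1])$ and this restriction lies in $L_2$, Bessel's inequality forces the coefficients to tend to $0$, so their Ces\`aro mean tends to $0$ and $\Delta^{\{e_n\},1}_LS\Phi=0$; the weak uniform density of $\{e_n\}$ is exactly what would have been needed to control a nonzero L\'evy part, which here is absent. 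Finally, $0$ is trivially a $U$-functional, so $\Phi\in Dom\,\widetilde{\Delta}_L^{\{e_n\},1}$, and by~(\ref{LLPhi}) $\widetilde{\Delta}_L^{\{e_n\},1}\Phi=S^{-1}(0)=0$. I expect the only genuinely delicate point to be the uniform control needed to differentiate the chaos series term by term and to identify $(S\Phi)''(\xi)$ with the integral operator of kernel $G^2(\xi)$; the remainder is bookkeeping with the estimate~(\ref{tens2}).
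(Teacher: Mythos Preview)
Your proposal is correct and follows essentially the same line as the paper: both expand $\Phi$ in chaos, differentiate $S\Phi$ termwise, and bound $\sum_{n\ge2}n(n-1)F^n\widehat{\otimes}_{n-2}\xi^{\otimes(n-2)}$ in $M_N(\mathbb{C})\otimes H_\mathbb{C}^{\widehat{\otimes}2}$ via the contraction estimate~(\ref{tens2}) together with Cauchy--Schwarz against the Fock weights $n!$, concluding that $(S\Phi)''(\xi)$ is an $L_2$-kernel operator with no L\'evy part. The only cosmetic difference is that the paper packages the Cauchy--Schwarz step through the auxiliary entire function $q(z)=\sum_{n\ge2}\frac{n^2(n-1)^2}{n!}z^{n-2}$; your handling of the final ``hence'' via the orthonormal-system argument of Proposition~\ref{prop2} is likewise what the paper intends.
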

\begin{proof}
Let $\Phi \in L_2(E^\ast_{\mathbb{R}},\mu_I;M_N(\mathbb{C}))$ and $\Phi=\sum_{n=0}^\infty<:x^{\otimes n}:,F^n>$.
 By the nuclear theorem (see, e.g.,~\cite{Obata1994,BS})  $L(E_\mathbb{C},E^\ast_\mathbb{C})\cong (E_\mathbb{C}\otimes_\pi E_\mathbb{C})^\ast$. Thus one can identify $S\Phi''(\xi)$ with the element from $M_N(\mathbb{C})\otimes_\pi (E_\mathbb{C}\widehat{\otimes}_\pi E_\mathbb{C})^\ast$. 
By~(\ref{Stranf}) we have
\begin{multline*}
<S\Phi''(\xi) ,\zeta \otimes \eta>=\sum_{n=2}^\infty n(n-1)<F^{n},\xi^{\otimes (n-2)}\widehat{\otimes} \zeta\widehat{\otimes} \eta>=\\
=\sum_{n=2}^\infty n(n-1)<F^{n}\widehat{\otimes}_{(n-2)}\xi^{\otimes (n-2)},\zeta\widehat{\otimes} \eta>.
\end{multline*}
We consider the entire function 
$$
q(z)=\sum_{n=2}^\infty  \frac{n^2(n-1)^2}{n!}z^{n-2}.
$$
By the Schwarz  inequality and  estimates~(\ref{tens1}) and~(\ref{tens2}) we have that
\begin{multline*}
\sum_{n=2}^\infty |n(n-1)F^{n}\widehat{\otimes}_{(n-2)}\xi^{\otimes (n-2)}|_{0}
\leq \sum_{n=2}^\infty {\sqrt{n!}} |F^{n}|_{0} \frac{n(n-1)}{\sqrt{n!}}|\xi|^{(n-2)}_{0}\leq\\
\\ \leq (\sum_{n=2}^\infty n!|F^{n}|^2_{0})^{\frac 12}(\sum_{n=2}^\infty  \frac{n^2(n-1)^2}{n!}|\xi|^{2(n-2)}_{0})^{\frac 12} \leq\|F\|_{0} \sqrt{q(|\xi|_0^2)}.
\end{multline*}
We obtain that the series 
$$
\sum_{n=2}^\infty n(n-1)F^{n}\widehat{\otimes}_{(n-2)}\xi^{\otimes (n-2)}
$$
converges in $M_N(\mathbb{C})\otimes H_{\mathbb{C}}^{\widehat{\otimes} 2}$ for all $\xi\in E_\mathbb{C}$.
Then $S\Phi''(\xi)\in M_N(\mathbb{C})\otimes H_{\mathbb{C}}^{\widehat{\otimes} 2}$ and there exists $K^V(\xi;\cdot,\cdot)\in L^{sym}_2(T\times T,T_2)$ such that
$$
<S\Phi''(\xi)\zeta,\eta>=<S\Phi''(\xi),\zeta\widehat{\otimes}\eta>=\int_{T}\int_{T} K^V_{\mu\nu}(\xi;s,t)\zeta^\mu(t)\eta^\nu(s)dtds.
$$
This means that  $S\Phi \in C^2_{L}(E_\mathbb{C},M_N(\mathbb{C}))$ and the  L\'{e}vy part of the second derivative of $S\Phi$ vanishes.
\end{proof}

Denote  the  embedding of  $M_N(\mathbb{C})\otimes\Gamma(L_2([0,1],\mathbb{R}^d))$ into $M_N(\mathbb{C})\otimes\Gamma(L_2(\mathbb{R},\mathbb{R}^d))$ by the symbol  $\mathcal{J}_3$  and  the orthogonal projection $M_N(\mathbb{C})\otimes\Gamma(L_2(\mathbb{R},\mathbb{R}^d))$ on $M_N(\mathbb{C})\otimes\Gamma(L_2([0,1],\mathbb{R}^d))$ by the symbol $\mathcal P$.  Let the linear mapping  $\mathcal{J}\colon L_2(\Omega,P;M_N(\mathbb{C}))\to L_2(E^\ast_{\mathbb{R}},\mu_I;M_N(\mathbb{C}))$ be defined by
$$
\mathcal{J}=\mathcal {J}_3 \mathcal {J}_1^{-1}.
$$

\begin{theorem}
If $\Psi\in Dom\Delta_L$, then
\begin{equation}
\label{main}
\mathcal J\Delta_L \Psi=\pi^2\widetilde{\Delta}^{\{l_n\},-1}_{L}\mathcal J\Psi.
\end{equation}
\end{theorem}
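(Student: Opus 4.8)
The plan is to prove (\ref{main}) by working through the $S$-transform, since by (\ref{LLPhi}) the Hida Laplacian is $\widetilde{\Delta}^{\{l_n\},-1}_{L}=S^{-1}\Delta^{\{l_n\},-1}_{L}S$. Concretely, I would establish the pointwise-in-$\xi$ identity
\[
S\bigl(\mathcal J\Delta_L\Psi\bigr)=\pi^2\,\Delta^{\{l_n\},-1}_{L}\bigl(S\mathcal J\Psi\bigr),
\]
show that its right-hand side exists and is a $U$-functional, and then apply $S^{-1}$. The whole computation rests on one elementary relation between the two bases in play: since $h_k(t)=\sqrt2\sin(\pi kt)$ and $l_k(t)=\sqrt2\cos(\pi kt)$, we have $\dot h_k=\pi k\,l_k$. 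Thus the Cameron--Martin direction $p_\mu h_k$ used by $\Delta_L$ is carried by the differentiation map into $\pi k\,p_\mu l_k$, and this single factor $\pi k$, squared over the two differentiations, is simultaneously the source of the weight $k^{1-s}=k^2$ (with $s=-1$) occurring in $\Delta^{\{l_n\},-1}_{L}$ and of the scalar $\pi^2$ in the statement.

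The heart of the argument is a term-by-term identity on a fixed chaos. Writing $\Psi=\mathcal J_1(\sum_n I_n(F^n))$ and using (\ref{derivative}) twice together with $\dot h_k=\pi k\,l_k$, I would compute
\[
\partial_{p_\mu h_k}\partial_{p_\mu h_k}\Psi=(\pi k)^2\sum_n I_{n-2}\bigl((F^n\widehat{\otimes}_1 p_\mu l_k)\widehat{\otimes}_1 p_\mu l_k\bigr),
\]
apply $\mathcal J=\mathcal J_3\mathcal J_1^{-1}$ (which only extends the kernels by zero outside $[0,1]$ and re-embeds the Fock vector) and take the $S$-transform via (\ref{Stranf}). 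On the other side, Proposition~\ref{prop1} guarantees $S\mathcal J\Psi\in C^2_{L}(E_\mathbb C,M_N(\mathbb C))$ with vanishing L\'evy part, and differentiating the homogeneous expansion $S\mathcal J\Psi(\xi)=\sum_n\langle F^n,\xi^{\otimes n}\rangle$ twice gives
\[
\langle (S\mathcal J\Psi)''(\xi)\,p_\mu l_k,\,p_\mu l_k\rangle=\sum_n n(n-1)\,\langle F^n,\xi^{\otimes(n-2)}\widehat{\otimes} p_\mu l_k\widehat{\otimes} p_\mu l_k\rangle.
\]
The contraction identity $\langle F^n\widehat{\otimes}_1 f,h\rangle=\langle F^n,h\widehat{\otimes} f\rangle$ shows that the $S$-transform of the first display yields exactly the same contractions, and the combinatorial factors $n(n-1)$ agree on both sides because each arises from second-order differentiation of a degree-$n$ object. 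Hence, for every $k$, $\mu$ and $\xi\in E_\mathbb C$,
\[
S\bigl(\mathcal J\,\partial_{p_\mu h_k}\partial_{p_\mu h_k}\Psi\bigr)(\xi)=\pi^2 k^2\,\langle (S\mathcal J\Psi)''(\xi)\,p_\mu l_k,\,p_\mu l_k\rangle.
\]
Interchanging the infinite chaos sum with $S$, with $\mathcal J$, and with the two Fr\'echet differentiations is legitimate by the estimates (\ref{tens1})--(\ref{tens2}), exactly as in the proof of Proposition~\ref{prop1}.

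Summing this over $\mu$ and $k=1,\dots,N$ and dividing by $N$ produces, for each $\xi$, a pre-limit equality between the image under $S\mathcal J$ of the Ces\`aro mean of the Malliavin second derivatives and $\pi^2$ times the $N$-th partial sum defining $\Delta^{\{l_n\},-1}_{L}(S\mathcal J\Psi)(\xi)$. Since $\Psi\in Dom\,\Delta_L$, the Ces\`aro mean converges to $\Delta_L\Psi$ in $L_2(\Omega,P;M_N(\mathbb C))$; because $\mathcal J$ is an isometry into $L_2(E_\mathbb R^\ast,\mu_I;M_N(\mathbb C))$ and evaluation of the $S$-transform at a fixed $\xi$ is continuous there (being a pairing with $\psi_\xi\in\Gamma(H_\mathbb C)$), the left-hand side converges pointwise to $S(\mathcal J\Delta_L\Psi)(\xi)$. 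Therefore the right-hand partial sums converge as well, which proves both that $S\mathcal J\Psi\in Dom\,\Delta^{\{l_n\},-1}_{L}$ and that $\pi^2\Delta^{\{l_n\},-1}_{L}(S\mathcal J\Psi)=S(\mathcal J\Delta_L\Psi)$ pointwise. As $\mathcal J\Delta_L\Psi\in L_2(E_\mathbb R^\ast,\mu_I;M_N(\mathbb C))$, its $S$-transform is a $U$-functional, so $\mathcal J\Psi\in Dom\,\widetilde{\Delta}^{\{l_n\},-1}_{L}$, and applying $S^{-1}$ to (\ref{LLPhi}) gives $\pi^2\widetilde{\Delta}^{\{l_n\},-1}_{L}\mathcal J\Psi=\mathcal J\Delta_L\Psi$, which is (\ref{main}).

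I expect the main obstacle to be the term-by-term intertwining identity of the second paragraph: matching the combinatorial factors and the $\pi^2k^2$ weight while rigorously justifying the interchange of the chaos series with $\mathcal J$, $S$, and the two differentiations. In particular one must check that the $k^2$-weighted Ces\`aro limit of the coefficients $\langle K^V_{\mu\mu}(\xi),l_k\otimes l_k\rangle$ of the (purely Volterra) kernel of $(S\mathcal J\Psi)''$ is exactly what the Malliavin operator reproduces through the contractions against $\dot h_k=\pi k\,l_k$; this contribution of the Volterra part, amplified by the weight $k^2$, is precisely the feature distinguishing $\Delta^{\{l_n\},-1}_{L}$ from the classical L\'evy Laplacian, for which the same part is annihilated by weak uniform density.
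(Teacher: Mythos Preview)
Your proposal is correct and follows essentially the same route as the paper: both hinge on the identity $\dot h_k=\pi k\,l_k$, apply formula~(\ref{derivative}) twice to obtain the per-$k$ relation $S(\mathcal J\,\partial_{p_\mu h_k}^2\Psi)(\xi)=\pi^2k^2\langle (S\mathcal J\Psi)''(\xi)p_\mu l_k,p_\mu l_k\rangle$ via Proposition~\ref{prop1}, and then pass to the Ces\`aro limit using the $L_2$ convergence in $Dom\,\Delta_L$ together with the continuity of $\Phi\mapsto\langle\langle\Phi,\psi_\xi\rangle\rangle$. Your explicit remark that $S(\mathcal J\Delta_L\Psi)$ is automatically a $U$-functional (hence $\mathcal J\Psi\in Dom\,\widetilde{\Delta}^{\{l_n\},-1}_L$) is a point the paper leaves implicit.
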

\begin{proof}
Let $\Psi=\sum_{n=0}^\infty I_n(F^n)\in Dom\Delta_L$. 
For all $\xi\in E_\mathbb{C}$ and $M\in M_N(\mathbb{C})$ we have
\begin{multline}
\label{final22}
\lim_{n\to\infty}\frac 1n \sum_{k=1}^n\sum_{\mu=1}^d (S(\mathcal{J}\partial_{p_\mu h_k}\partial_{p_\mu h_k}\Psi)(\xi),\overline{M})_{M_N(\mathbb{C})}=\\=\lim_{n\to\infty}<<\mathcal{J}(\frac 1n \sum_{k=1}^n \sum_{\mu=1}^d\partial_{p_\mu h_k}\partial_{p_\mu h_k}\Psi),M\otimes\psi_\xi>>=\\=
\lim_{n\to\infty}(\frac 1n \sum_{k=1}^n \sum_{\mu=1}^d\partial_{p_\mu h_k}\partial_{p_\mu h_k}\Psi,\overline{\mathcal P (M\otimes \psi_{\xi})})_{L_2(\Omega,P;M_N(\mathbb{C}))}=\\=(\Delta_L \Psi,\overline{\mathcal P (M\otimes \psi_\xi)})_{L_2(\Omega,P;M_N(\mathbb{C}))}
=<<\mathcal J(\Delta_L \Psi),M\otimes\psi_\xi>>=\\=(S(\mathcal J\Delta_L \Psi)(\xi),\overline{M})_{M_N(\mathbb{C})}.
\end{multline} 
Since  $\dot{h}_k(t)=\pi k l_k(t)$,  equality~(\ref{derivative}) implies 
$$
\partial_{p_\mu h_k}\partial_{p_\mu h_k}\Psi=\pi^2 k^2 \sum_{n=2}^\infty n(n-1) I_{n-2}
(F^{n}\widehat{\otimes}_2(p_\mu l_k\otimes p_\mu l_k)).
$$
Then
$$
\mathcal {J}\partial_{p_\mu h_k}\partial_{p_\mu h_k}\Psi=\pi^2 k^2 \sum_{n=2}^\infty n(n-1) <:x^{\otimes n-2}:,
F^{n}\widehat{\otimes}_2(p_\mu l_k\otimes p_\mu l_k)>.
$$
Due to~(\ref{Stranf}) we obtain
$$
S(\mathcal {J}\partial_{p_\mu h_k}\partial_{p_\mu h_k}\Psi)(\xi)=\pi^2 k^2 \sum_{n=2}^\infty n(n-1) <F^{n}\widehat{\otimes}_2(p_\mu l_k\otimes p_\mu l_k),\xi^{\otimes n-2}>.
$$ 
Since $\mathcal J\Psi\in L_2(E^\ast_{\mathbb{R}},\mu_I;M_N(\mathbb{C}))$, Proposition~\ref{prop1}
implies
 $$S(\mathcal J\Psi)''(\xi)=\sum_{n=2}^\infty n(n-1)F^{n} \widehat{\otimes}_{(n-2)} \xi^{\otimes n-2}\in  M_N(\mathbb{C})\otimes H^{\widehat{\otimes}2}_{\mathbb{C}}.$$
Then
\begin{multline*}
\pi^2 k^2 <S(\mathcal J\Psi)''(\xi),p_\mu l_k \otimes p_\mu l_k>=\sum_{n=2}^\infty \pi^2 k^2 n(n-1)<F^{n} \widehat{\otimes}_{(n-2)} \xi^{\otimes n-2},p_\mu l_k \otimes p_\mu l_k>=\\
=\sum_{n=2}^\infty \pi^2 k^2 n(n-1) <F^{n}\widehat{\otimes}_2(p_\mu l_k\otimes p_\mu l_k),\xi^{\otimes n-2}>=S(\mathcal {J}\partial_{p_\mu h_k}\partial_{p_\mu h_k}\Psi)(\xi).
\end{multline*}
Thus
\begin{multline}
\label{final1}
\lim_{n\to\infty}\frac 1n \sum_{k=1}^n\sum_{\mu=1}^d S(\mathcal {J}\partial_{p_\mu h_k}\partial_{p_\mu h_k}\Psi)(\xi)=\\=\lim_{n\to\infty}\frac 1n \sum_{k=1}^n \sum_{\mu=1}^d\pi^2 k^2 <S(\mathcal J\Psi)''(\xi)p_\mu l_k,p_\mu l_k>=\pi^2\Delta_L^{\{l_n\},-1}S(\mathcal J\Psi)(\xi).
\end{multline}
From~(\ref{final1}) and~(\ref{final22})  we obtain that
$$
S(\mathcal J\Delta_L \Psi)(\xi)=\pi^2\Delta_L^{\{l_n\},-1}S(\mathcal J\Psi)(\xi).
$$
This means that  equality~(\ref{main}) is true.
\end{proof}

\section*{Acknowledgments}

The author thanks  O.G. Smolyanov and  I.V. Volovich  for useful discussions.

This work is supported by the Russian Science Foundation under grant 14-50-00005.

\end{document}